\title{\textbf{Harmonic rigidity at fixed spectral gap in one dimension}}
\author{Arseny Pantsialei\thanks{e-mail: wselend@gmail.com} \\
\emph{Institute of Physics, Maria Curie-Sk\l{}odowska University, 20-031 Lublin, Poland}}
\newtheorem{theorem}{Theorem}
\newtheorem{lemma}{Lemma}
\newtheorem{remark}{Remark}
\DeclareMathOperator{\Var}{Var}
\begin{document}
\date{}
\maketitle

\begin{abstract}
We resolve the static isoperimetric problem underlying the Mandelstam-Tamm limit: among 1D confining potentials at fixed gap $\Delta$, the 
harmonic trap uniquely maximizes $\mathrm{Var}_0(x)$, yielding the exact geometric QSL $g_{xx} \le (2m\Delta)^{-1}$ with an iff criterion. 
Beyond the extremum we prove quantitative rigidity (TRK-tail and structural $L^2$ control), extend to magnetic settings (longitudinal iff; 
transverse guiding-center bounds), and note applications to static polarizability, quantum-metric limits, and trap benchmarking.
\end{abstract}

\section{Introduction}\label{sec:intro}

Quantum speed limits (QSLs) of the Mandelstam-Tamm (MT) type connect the geometry of 
state space to energy scales \cite{MandelstamTamm1945,AnandanAharonov1990,DeffnerCampbell2017}. 
For one-parameter shifts $x \mapsto x+\lambda$, the $xx$-component of the Fubini-Study metric is 
$g_{xx}=\Var_{0}(x)/\hbar^{2}$ \cite{ProvostVallee1980,BraunsteinCaves1994}. This leads to a natural 
{static isoperimetric} question: with a fixed spectral gap $\Delta=E_{1}-E_{0}$, how large can the
 geometric sensitivity $\Var_{0}(x)$ be across confining potentials $V$? The answer provides a firm benchmark
  for trap anharmonicity and a static analogue of a QSL for the quantum metric.

Historically, the tools that are enough for a {non-sharp}
 upper bound are well known: spectral decomposition, double
  commutators, and the Thomas-Reiche-Kuhn (TRK) $f$-sum rule 
  for the coordinate \cite{SakuraiQM,ReedSimonII}. In this sense, the ''skeleton'' of the inequality is folklore.
   However, two hard points remain that the standard approach \emph{does not} cover. These are (i) \emph{sharpness} with a 
   \emph{full if-and-only-if} equality criterion, and (ii) \emph{quantitative rigidity} (stability): how exactly the deficit 
   to the bound controls the spectral ''tail'' and the structural anharmonicity of the potential. This work closes both of these
    gaps.

Our main 1D result is simple and sharp: for all confining $V\in C^{2}(\mathbb{R})$ at fixed $\Delta$,
\[
\Var_{0}(x) \le \frac{\hbar^{2}}{2m\Delta},
\]
and {equality holds if and only if} $V(x)=\tfrac12 m\omega^{2}x^{2}+C$ with $\hbar\omega=\Delta$.
 In metric form this gives an {exact static analogue} of the MT limit: \(g_{xx}\le (2m\Delta)^{-1}\) with the same iff criterion. 
 The proof uses the ''active gap'' and an exactly sharp {master} bound for any self-adjoint observable $A$:
\[
\Var_{0}(A) \le \frac{1}{2\Delta_A}\langle [A,[H,A]]\rangle_0,\qquad
\Delta_A:=\min\{E_n-E_0:\ A_{n0}\neq0\},
\]
with a {clear equality condition}: $A\psi_0$ lies entirely in the active subspace (see §\ref{sec:setup}).
 For $A=x$, this together with TRK gives the sharp bound and the iff statement.

Beyond the extremal value, we establish quantitative rigidity (see §\ref{sec:rigidity}): the deficit
\[
\varepsilon:=\frac{\hbar^{2}}{2m\Delta}-\Var_{0}(x) \ge 0
\]
controls linearly (i) the TRK ''tail'' via the \emph{second gap} $\Gamma=E_{2}-E_{1}$ with 
sharp constants, and (ii) the structural anharmonicity via the $L^{2}(\rho_{0})$ 
norm of the deviation of the force $V'(x)$ from $m\omega^{2}x$. With an upper ''window'' on the energies in place, we obtain
 a direct lower bound.
Namely, $\varepsilon\gtrsim \|V'-m\omega^{2}x\|_{L^{2}(\rho_{0})}$ with an explicit constant
(formulas D1-D3).
 To the best of our knowledge, such an explicit stability picture for a fixed 
active gap in 1D has not been stated before.

The picture extends naturally. In many dimensions we get the same sharp bounds for the projections $x_u$ and an iff 
criterion of being ''quadratic along $u$'' (see §\ref{sec:dgt1-nofield}). For magnetic systems, we split the motion into
 parts along and across the field. Along the direction $\hat{\mathbf b}$ we have the exact identity
\[
[H,[H,x_{\parallel}]]=(\hbar^{2}/m)\partial_{\parallel}V,
\]
which gives the same sharp bound and a full iff statement even for inhomogeneous fields with a fixed direction 
(see \Cref{thm:iff-par,thm:iff-par-inhomB}). Across the field we use the guiding center $R_u$. Here we derive an exact 
TRK formula via the projected Hessian of $V$ and prove transverse analogues of D1-D2 (see §\ref{sec:magnetic-transverse}). 
All results are gathered and unified by the master inequality for a general observable $A$.

Within the MT/Fubini-Study viewpoint, we will read ''speed limits'' as metric constraints that become operational once 
the energy variance (or the FS metric) is known 
\cite{MandelstamTamm1945,AnandanAharonov1990,DeffnerCampbell2017,ProvostVallee1980,BraunsteinCaves1994}. In one-dimensional 
confining traps, a natural testbed is provided by the
center of mass ''rigid'' mode.
Kohn's theorem and its harmonic variant show that the collective frequency is
protected and that oscillator-like transported states persist even with
interactions \cite{Kohn1961,Dobson1994HPT,LaiPan2016SciRep}.
 This aligns with the geometric picture, since isolating the center 
of mass simplifies geodesic estimates and ties metric bounds directly to the external potential.

On the response side, static polarizabilities satisfy strict sum rules and clean scaling laws that constrain spectra and dipole matrix elements. Beyond the textbook 
TRK rule \cite{SakuraiQM}, recent density-based formulations clarify length scaling across broad classes of systems and its transfer from single-particle models to 
many-electron atoms and molecules \cite{Szabo2022PRL,Goger2024JCTC,Cheng2024PRA,Summa2023JCTC}. These inputs can be folded into metric bounds to yield operational 
constraints for concrete traps.

Experimentally, such speed/geometry bounds matter in ion traps, optomechanics, and atomic metrology, where preparation times and evolution rates set sensitivities 
and clock performance \cite{Leibfried2003RMP,Aspelmeyer2014RMP,PezzeSmerzi2018RMP}. This motivates \emph{potential-sensitive} QSL formulations tailored to common 
one-axis traps and effective 1D models, in the spirit of \cite{DeffnerCampbell2017}.

We work throughout within a standard functional-analytic framework (self-adjoint $H$, discrete spectrum, positivity of $\psi_{0}$) \cite{ReedSimonII}. 
For background on dynamical QSLs and geometric formulations see
\cite{MandelstamTamm1945,AnandanAharonov1990,DeffnerCampbell2017,ProvostVallee1980,BraunsteinCaves1994}.
For TRK and classic sum rules see \cite{SakuraiQM}.
For separability of the harmonic mode and Kohn's theorem in magnetic fields see
\cite{Kohn1961,Dobson1994HPT}.
 In summary 
(as detailed above), our contributions are: (i) a \emph{sharp} static QSL with a \emph{full iff} equality test; (ii) quantitative rigidity with explicit linear 
constants via the second gap $\Gamma$ and the TRK tail; (iii) \emph{structural} rigidity through $V'-m\omega^{2}x$; and (iv) longitudinal magnetic iff results and 
transverse TRK identities—unified by a single master inequality.

\section{Setup and basic facts}\label{sec:setup}

Consider the one-dimensional Schrödinger Hamiltonian
\begin{equation}\label{eq:H}
H=\frac{p^2}{2m}+V(x),
\end{equation}
where $m>0$ and $V\in C^{2}(\mathbb{R})$ is a confining potential (e.g., $\lim_{|x|\to\infty}V(x)=+\infty$). This ensures a 
discrete spectrum
\(
E_0<E_1\le E_2\le\ldots
\)
and a non-degenerate ground state $\psi_0>0$. For any operator $A$ we write
\[
\langle A\rangle_0:=\langle \psi_0, A \psi_0\rangle,\qquad
\Var_0(A):=\langle A^2\rangle_0-\langle A\rangle_0^2 .
\]

Let $\{\psi_n\}_{n\ge0}$ be the eigenfunctions of $H$: $H\psi_n=E_n\psi_n$, $\langle\psi_n,\psi_m\rangle=\delta_{nm}$. Write
\[
\Delta:=E_1-E_0>0 \quad\text{(first gap)},\qquad
\Gamma:=E_2-E_1\ge 0 \quad\text{(second gap)}.
\]
For convenience set $A_{n0}:=\langle\psi_n,A \psi_0\rangle$. In connection with the Thomas-Reiche-Kuhn sum rule,
 it is useful to define the ''TRK tail fraction''
\begin{equation}\label{eq:etaTRK}
\eta_{\mathrm{TRK}}
:=1-\frac{\Delta |x_{10}|^2}{S}\in[0,1],\qquad
S:=\sum_{n>0}(E_n-E_0) |x_{n0}|^2 ,
\end{equation}
where $x_{n0}:=\langle\psi_n,x \psi_0\rangle$.

Directly computing the commutators for \eqref{eq:H} gives
\begin{equation}\label{eq:double-comm}
[H,x]=-\frac{i\hbar}{m} p,\qquad
[x,[H,x]]=\frac{\hbar^2}{m},\qquad
[H,[H,x]]=\frac{\hbar^2}{m} V'(x).
\end{equation}
From this we immediately get the standard $f$-sum rule (TRK):
\begin{equation}\label{eq:TRK}
\sum_{n>0}(E_n-E_0) |x_{n0}|^2
=\frac12 \langle [x,[H,x]]\rangle_0
=\frac{\hbar^2}{2m}.
\end{equation}

Finally, we record a general ''sharp'' bound for any self-adjoint observable $A$. Define the active spectral gap

\begin{equation}\label{eq:DeltaA}
\Delta_A := \min\{E_n-E_0: A_{n0}\neq 0\},
\end{equation}
and the quantity
\begin{equation}\label{eq:SA}
S_A := \tfrac12\langle [A,[H,A]]\rangle_0 = \sum_{n>0}(E_n-E_0)|A_{n0}|^2 .
\end{equation}
Then the ground-state variance satisfies
\begin{equation}\label{eq:masterA}
\Var_0(A)=\sum_{n>0}|A_{n0}|^2
\le \frac{S_A}{\Delta_A}
= \frac{1}{2\Delta_A}\big\langle [A,[H,A]]\big\rangle_0 .
\end{equation}

From \eqref{eq:SA} we have
\[
2S_A=\sum_{n>0}2(E_n-E_0)|A_{n0}|^2.
\]
Since for all $n$ with $A_{n0}\neq 0$ we have $(E_n-E_0)\ge \Delta_A$, it follows that

\[
\sum_{n>0}|A_{n0}|^2
 \le  \sum_{n>0}\frac{E_n-E_0}{\Delta_A} |A_{n0}|^2
 =  \frac{S_A}{\Delta_A},
\]
which gives \eqref{eq:masterA}.

{{The inequality \eqref{eq:masterA} is completely general, but a universal sharp statement at fixed spectral gap
requires that the ''TRK constant'' $S_A$ be controlled independently of the potential.
In the 1D Schr\"odinger case $H=p^2/(2m)+V(x)$ this is the reason why the coordinate $A=x$ is special:
\(
[x,[H,x]]=\hbar^2/m
\)
is a constant, hence $S_x=\hbar^2/(2m)$ is potential-independent.
By contrast, for a multiplicative observable $A=f(x)$ one has
\(
[f(x),[H,f(x)]]=(\hbar^2/m)\,(f'(x))^2
\)
and therefore
\(
S_{f}=\frac{\hbar^2}{2m}\langle (f'(x))^2\rangle_0,
\)
which depends on the ground-state density (hence on $V$). Consequently, fixing a gap alone does not determine $S_f$,
and no universal sharp fixed-gap extremizer should be expected for general $f(x)$ without additional input controlling $S_A$.}}

If $V$ is even, then $\psi_0$ is even, $\psi_1$ is odd, and the matrix element $x_{10}\neq 0$. Hence for $A=x$ the active 
gap $\Delta_A$ equals the first gap $\Delta$. In many dimensions the same holds for $x_u$ when $V$ is symmetric under the 
reflection $x_u\mapsto -x_u$.

The equality criterion is as follows. In \eqref{eq:masterA} equality holds if and only if the vector $A\psi_0$ lies entirely
 in the eigenspace at energy $E_0+\Delta_A$, i.e., $A\psi_0\in\mathrm{Ran}P_{E_0+\Delta_A}$. In particular, if the active 
 level is non-degenerate, then $A\psi_0$ is proportional to the first ''active'' eigenfunction 
 $\psi_{1_A}$ with $E_{1_A}-E_0=\Delta_A$.

A full justification of \eqref{eq:double-comm}-\eqref{eq:TRK} (operator domains, closure of commutators, and spectral expansion)
 is given in Appendix~A. See also \cite{SakuraiQM}.

Finally, the quantum metric component for the ground state under the shift $x\mapsto x+\lambda$ is
\begin{equation}\label{eq:qmetric}
g_{xx}=\frac{\Var_0(x)}{\hbar^2},
\end{equation}
so any bounds on $\Var_0(x)$ are equivalent to sharp bounds on $g_{xx}$ at a fixed spectral gap.

\section{Main result: extremizing the variance at fixed gap}
\label{sec:theoremA}

Let $H=\frac{p^2}{2m}+V(x)$ with $V\in C^2(\mathbb R)$ confining. Then the spectrum is discrete
$E_0<E_1\le E_2\le\ldots$, and the ground state is non-degenerate with $\psi_0>0$ \cite{ReedSimonII}.
Set $\Delta:=E_1-E_0>0$ and $\rho_0:=|\psi_0|^2$. All domain issues
(correctness of commutators and expansions) are collected in App.~A.

Among all confining potentials
with a fixed first gap $\Delta=E_1-E_0$ one has
\begin{equation}\label{eq:mainbound}
\Var_0(x)\ \le\ \frac{\hbar^2}{2m\Delta},
\end{equation}
and equality in \eqref{eq:mainbound} holds if and only if

\begin{equation}
V(x)=\tfrac12m\omega^2 x^2+C,\qquad \hbar\omega=\Delta,
\end{equation}

with an arbitrary constant $C$. In particular,
\[
\sup_{V:E_1-E_0=\Delta}\Var_0(x)=\frac{\hbar^2}{2m\Delta},
\]
and, up to adding $C$, the unique extremizer is the harmonic oscillator.

\emph{Proof.}
The spectral expansion gives

\begin{equation}\label{eq:var-spectral}
\Var_0(x)=\sum_{n>0}|x_{n0}|^2,\qquad x_{n0}:=\langle\psi_n,x\psi_0\rangle .
\end{equation}

For the Hamiltonian \eqref{eq:H} the double commutator satisfies
$[x,[H,x]]=\hbar^2/m$ (see \eqref{eq:double-comm}), hence the standard TRK formula

\begin{equation}\label{eq:trk_local}
\sum_{n>0}(E_n-E_0)|x_{n0}|^2
=\tfrac12\langle [x,[H,x]]\rangle_0
=\frac{\hbar^2}{2m}.
\end{equation}

Since $E_n-E_0\ge\Delta$ for all $n>0$, from \eqref{eq:var-spectral}-\eqref{eq:trk_local} we obtain
\[
\Var_0(x)
=\sum_{n>0}\frac{E_n-E_0}{E_n-E_0}|x_{n0}|^2
\le \frac{1}{\Delta}\sum_{n>0}(E_n-E_0)|x_{n0}|^2
=\frac{\hbar^2}{2m\Delta},
\]

which is \eqref{eq:mainbound}. Equality can occur only if all terms with $E_n-E_0>\Delta$ vanish, i.e.
\[
x\psi_0\in\mathrm{span}\{\psi_1\}.
\]
Writing $x\psi_0=a\psi_1$ ($a\neq0$), we get $(H-E_0)x\psi_0=\Delta x\psi_0$ and therefore
\[
[H,[H,x]]\psi_0=(H-E_0)^2x\psi_0=\Delta^2 x\psi_0.
\]

On the other hand, \eqref{eq:double-comm} gives the operator identity $[H,[H,x]]=(\hbar^2/m) V'(x)$. Since $\psi_0>0$ (a.e.), 
it follows that the force is linear:
\[
V'(x)=\frac{m}{\hbar^2}\Delta^2 x=:m\omega^2 x,\qquad \omega=\Delta/\hbar.
\]
After integration $V(x)=\tfrac12 m\omega^2 x^2+C$. For such $V$ one has $\Var_0(x)=\hbar/(2m\omega)=\hbar^2/(2m\Delta)$, 
so the upper bound is attained, and no other potential (up to adding a constant $C$) can give equality. \hfill$\square$

We introduce equivalent equality criteria:
\begin{enumerate}\itemsep0.25em
\item $\Var_0(x)=\hbar^2/(2m\Delta)$;
\item $x \psi_0\in\mathrm{span}\{\psi_1\}$;
\item the TRK sum is saturated by the single transition $0\to1$, i.e.,
\[
\sum_{n>0}(E_n-E_0)|x_{n0}|^2
=\Delta |x_{10}|^2;
\]
\item $[H,[H,x]] \psi_0=\Delta^2 x\psi_0$ (and hence $V'(x)$ is linear).
\end{enumerate}

Indeed, (1)$\Rightarrow$(2): equality in the step replacing $(E_n-E_0)$ by $\Delta$ is only possible when $x_{n0}=0$ for 
all $n\ge2$. (2)$\Rightarrow$(3): then $(H-E_0)x\psi_0=\Delta x\psi_0$, and the TRK sum has the single term $n=1$. 
(3)$\Rightarrow$(4): as above, $[H,[H,x]] \psi_0=\Delta^2 x\psi_0$. (4)$\Rightarrow$(1): taking the inner product with 
$x\psi_0$ and using $[x,[H,x]]=\hbar^2/m$ gives $2\Delta \Var_0(x)=\hbar^2/m$.

All statements are invariant under adding a constant $C$ to $V$. In many dimensions, for the observable $A=u \cdot x$ 
one gets $\Var_0(u \cdot x)\le \hbar^2/(2m \Delta_u)$, and equality holds only when $V$ is quadratic along the direction 
$u$ (see §§\ref{sec:magnetic-parallel}, \ref{sec:magnetic-transverse}).

{{The bound \(\Var_0(x)\le \hbar^2/(2m\Delta)\) applies to any globally confining \(V\) with discrete spectrum.
However, in a deep symmetric double-well the first gap \(\Delta=E_1-E_0\) is an exponentially small tunnelling
splitting. Thus the right-hand side can become very large while \(\Var_0(x)\) stays of order of the well separation.
This is consistent with our rigidity picture. Near-saturation forces \(V\) to be close to the harmonic extremizer,
whereas a deep double-well is maximally far from harmonic. Thus in the tunnelling regime the inequality remains true
but may lose quantitative usefulness unless one fixes additional spectral/geometric data beyond \(\Delta\), or considers
an observable whose active gap is not set by the tunnelling splitting.}}

\section{Quantitative rigidity at fixed gap}
\label{sec:rigidity}

We study the deficit from the limiting value
\[
\varepsilon := \frac{\hbar^2}{2m \Delta}-\Var_0(x)\ \ge 0,
\]
and how the ''tail'' from higher states contributes to the position variance. Let $\{\psi_n\}_{n\ge0}$ be the eigenfunctions, 
$H\psi_n=E_n\psi_n$, and set $x_{n0}:=\langle\psi_n,x \psi_0\rangle$. Then
\[
S:=\sum_{n>0}(E_n-E_0) |x_{n0}|^2=\frac{\hbar^{2}}{2m}\quad\text{(TRK)},\qquad
\Delta:=E_1-E_0>0,\quad
\Gamma:=E_2-E_1\ge 0,
\]
and we define the ''tail'' and the TRK tail fraction by
\[
T:=\sum_{n\ge2}|x_{n0}|^2,\qquad
\eta_{\mathrm{TRK}}
:=1-\frac{\Delta |x_{10}|^2}{S}\in[0,1].
\]
All domain issues and the derivation of TRK are collected in App.~A.

It is convenient to fix a universal rigidity form that transfers unchanged to any self-adjoint observable $A$. 
This is done by the replacements
\(x_{n0} \to A_{n0}:=\langle\psi_n,A \psi_0\rangle\),
\(S \to S_A\) from \eqref{eq:SA},
\(\Delta \to \Delta_A\),
\(\Gamma \to \Gamma_A\). 
Here
\begin{equation}\label{eq:GammaA}
\Gamma_A := \min\{ (E_n-E_0)-\Delta_A: A_{n0}\neq0, E_n-E_0>\Delta_A \}.
\end{equation}

Let $w_{\mathrm{tail}}:=\sum_{E_n-E_0\ge \Delta_A+\Gamma_A}|A_{n0}|^2$ be the mass of $A\psi_0$ outside the active subspace. 
Then from \eqref{eq:SA}-\eqref{eq:masterA} it follows that
\begin{equation}\label{eq:stabilityA}
\frac{1}{2\Delta_A} \big\langle [A,[H,A]]\big\rangle_0 - \Var_0(A)
 \ge  \frac{\Gamma_A}{\Delta_A} w_{\mathrm{tail}}.
\end{equation}
In other words, if $\Var_0(A)$ is close to the right-hand side of \eqref{eq:masterA}, then the vector $A\psi_0$ is almost 
entirely concentrated in the active subspace, and moreover
\[
w_{\mathrm{tail}} \le \frac{\Delta_A}{\Gamma_A} \left(\frac{1}{2\Delta_A} \big\langle [A,[H,A]]\big\rangle_0-\Var_0(A)\right).
\]

\subsection{Statements and explicit constants}

We start with a lower bound on the deficit via the TRK tail. For any confining $V\in C^2$ with fixed $\Delta$ and $\Gamma$ 
we have
\begin{equation}\label{eq:D1}
\boxed{\quad
\varepsilon \ge \frac{S \Gamma}{\Delta(\Delta+\Gamma)} \eta_{\mathrm{TRK}}
 = \frac{\hbar^2}{2m} \frac{\Gamma}{\Delta(\Delta+\Gamma)} \eta_{\mathrm{TRK}} .
\quad}
\tag{D1}
\end{equation}

A small deficit suppresses both the total variance ''tail'' and its TRK share. In particular,
\begin{equation}\label{eq:D2-tail}
\boxed{\qquad
T=\sum_{n\ge2}|x_{n0}|^2 \le \frac{\Delta}{\Gamma} \varepsilon,
\qquad
\eta_{\mathrm{TRK}} \le \frac{\Delta(\Delta+\Gamma)}{S \Gamma} \varepsilon
 = \frac{2m \Delta(\Delta+\Gamma)}{\Gamma \hbar^2} \varepsilon .
\qquad}
\tag{D2}
\end{equation}
Hence $\varepsilon\to 0$ implies $T\to 0$ and $\eta_{\mathrm{TRK}}\to 0$ with explicit rates.

For structural control, it is convenient to measure the deviation of $V'$ from the ''harmonic'' force with frequency 
$\omega:=\Delta/\hbar$ in the $\rho_0=|\psi_0|^2$ metric:
\[
g(x):=V'(x)-m\omega^2 x,\qquad  g _{L^2(\rho_0)}^2=\int_{\mathbb R} |g(x)|^2 \rho_0(x) dx.
\]
Then
\begin{equation}\label{eq:D3}
\boxed{\quad
\sum_{n\ge2}|x_{n0}|^2 \le
\frac{\hbar^4}{m^2} \frac{ g _{L^2(\rho_0)}^2}{\bigl[\Gamma(2\Delta+\Gamma)\bigr]^2} .
\quad}
\tag{D3}
\end{equation}
In other words, if $V'(x)$ is close to $m\omega^2 x$ in $L^2(\rho_0)$, then the tail $T$ is small with an explicit constant.

Let us note that \eqref{eq:D3} controls the tail $T$ via $g$, but it does not give a universal lower bound of the form 
$\varepsilon\ge c  g ^2$ without extra UV control of the spectrum. Indeed, for fixed $g$, the contribution of levels with large 
$E_n$ can be arranged so that $(E_n-E_0)-\Delta$ stays arbitrarily small in the sum.

If, in addition, for all $n\ge2$ with $x_{n0}\neq0$ there is an upper ''window''
\(
E_n-E_0\le \Delta+\Lambda
\)
(i.e., active transitions lie below $\Delta+\Lambda$), then we have the stronger bound
\begin{equation}\label{eq:UV}
\boxed{\quad
\varepsilon \ge \frac{\hbar^4}{4m^2}\ \frac{ g _{L^2(\rho_0)}^2}{\Delta^3 (\Delta+\Lambda)} .
\quad}
\end{equation}
For fixed $\Delta,\Lambda$, small $ g $ then forces small $\varepsilon$ with an explicit constant.

Finally, introduce normalized weights
\[
f_{n0}:=\frac{2m}{\hbar^2} (E_n-E_0) |x_{n0}|^2,\qquad
\eta_{\mathrm{TRK}}=\sum_{n\ge2} f_{n0},
\]
and the quantity
\[
\tilde\eta:=1-\Delta\sum_{n>0}\frac{f_{n0}}{E_n-E_0}.
\]
Then there is a ''corridor'' between two extreme tail distributions:
\begin{equation}\label{eq:corridor-main}
\boxed{\quad \frac{\Gamma}{\Delta+\Gamma} \eta_{\rm TRK} \le \tilde\eta \le \eta_{\rm TRK} .\quad}
\end{equation}
The left edge is attained when the entire tail sits at level $n=2$. The right edge when the tail is ''infinitely far'' 
in energy.

\subsection{Proof of D1}\label{thm:D1}

From TRK we get an exact form of the deficit:
\begin{align}
\varepsilon
&=\frac{1}{\Delta}\sum_{n>0}(E_n-E_0) |x_{n0}|^2-\sum_{n>0}|x_{n0}|^2 \notag\\
&=\frac{1}{\Delta}\sum_{n>0}\bigl[(E_n-E_0)-\Delta\bigr]|x_{n0}|^2
=\frac{1}{\Delta}\sum_{n\ge2}\bigl[(E_n-E_0)-\Delta\bigr]|x_{n0}|^2.
\label{eq:eps-exact}
\end{align}
Since for $n\ge2$ we have $E_n-E_0\ge\Delta+\Gamma$, it follows that
\begin{equation}\label{eq:eps-ge}
\varepsilon \ge \frac{\Gamma}{\Delta}\sum_{n\ge2}|x_{n0}|^2
 = \frac{\Gamma}{\Delta}T.
\end{equation}
On the other hand, by TRK (see \eqref{eq:TRK})
\[
S=\Delta |x_{10}|^2+\sum_{n\ge2}(E_n-E_0)|x_{n0}|^2
 \ge \Delta |x_{10}|^2+(\Delta+\Gamma) T,
\]
whence
\begin{equation}\label{eq:T-upper}
T \le \frac{S-\Delta|x_{10}|^2}{\Delta+\Gamma}
 = \frac{S}{\Delta+\Gamma} \eta_{\mathrm{TRK}}.
\end{equation}
Combining the exact form \eqref{eq:eps-exact} as
\[
\varepsilon=\frac{1}{\Delta}\bigl(S-\Delta|x_{10}|^2-\Delta T\bigr)
=\frac{S}{\Delta} \eta_{\mathrm{TRK}}-T
\]
with \eqref{eq:T-upper}, we get
\[
\varepsilon \ge \frac{S}{\Delta} \eta_{\mathrm{TRK}}
-\frac{S}{\Delta+\Gamma} \eta_{\mathrm{TRK}}
=\frac{S \Gamma}{\Delta(\Delta+\Gamma)} \eta_{\mathrm{TRK}},
\]
which proves \eqref{eq:D1}.

\subsection{Proof of D2}\label{thm:D2}

The first part of \eqref{eq:D2-tail} follows directly from \eqref{eq:eps-ge}:
\(
T\le (\Delta/\Gamma) \varepsilon.
\)
For the second, use the exact identity
\(
\varepsilon=\frac{1}{\Delta}\bigl(S \eta_{\mathrm{TRK}}-\Delta T\bigr),
\)
so that
\[
\eta_{\mathrm{TRK}}
=\frac{\Delta}{S} (\varepsilon+T)
 \le \frac{\Delta}{S} \left(\varepsilon+\frac{\Delta}{\Gamma}\varepsilon\right)
=\frac{\Delta(\Delta+\Gamma)}{S \Gamma} \varepsilon,
\]
as required.

\subsection{Proof of D3}\label{thm:D3}
Consider the vector
\[
\Phi := \bigl((H-E_0)^2-\Delta^2\bigr) x \psi_0.
\]
Since $(H-E_0)^2\psi_0=0$, we have
\[
\Phi=\bigl[(H-E_0)^2,x\bigr]\psi_0-\Delta^2 x\psi_0
=\bigl([H,[H,x]]-\Delta^2 x\bigr)\psi_0.
\]
Using the identity $[H,[H,x]]=(\hbar^2/m) V'(x)$ and the definition
$g(x)=V'(x)-m\omega^2 x$ with $\omega=\Delta/\hbar$, we get
\begin{equation}\label{eq:Phi-right}
 \Phi^2
=\left( \frac{\hbar^2}{m} g(x) \psi_0 \right)^{2}
=\frac{\hbar^4}{m^2}\  g _{L^2(\rho_0)}^2.
\end{equation}
On the other hand, expansion in eigenstates yields
\begin{equation}\label{eq:Phi-left}
 \Phi ^2
=\sum_{n>0}\bigl((E_n-E_0)^2-\Delta^2\bigr)^{2} |x_{n0}|^2
 \ge \bigl[\Gamma(2\Delta+\Gamma)\bigr]^2 \sum_{n\ge2}|x_{n0}|^2,
\end{equation}
since for $n\ge2$ we have $E_n-E_0\ge\Delta+\Gamma$, and the function
$a\mapsto a^2-\Delta^2$ is increasing on $a\ge\Delta$.
Comparing \eqref{eq:Phi-right} and \eqref{eq:Phi-left} gives \eqref{eq:D3}. \hfill$\square$

\subsection{Proof of the upper-window strengthening}

Assume that for all $n\ge2$ with $x_{n0}\ne0$ we have $E_n-E_0\le\Delta+\Lambda$.
Then for such $n$,
\[
(E_n-E_0)-\Delta = \frac{(E_n-E_0)^2-\Delta^2}{(E_n-E_0)+\Delta}
 \ge \frac{(E_n-E_0)^2-\Delta^2}{2\Delta+\Lambda}.
\]
Plugging this into \eqref{eq:eps-exact} and using the Cauchy-Bunyakovsky inequality,
\begin{align*}
\varepsilon
&=\frac{1}{\Delta}\sum_{n\ge2}\bigl[(E_n-E_0)-\Delta\bigr]|x_{n0}|^2
\ \ge\ \frac{1}{\Delta(2\Delta+\Lambda)}\sum_{n\ge2}\bigl((E_n-E_0)^2-\Delta^2\bigr)|x_{n0}|^2\\
&\ge \frac{1}{\Delta(2\Delta+\Lambda)}\
\frac{\displaystyle\sum_{n\ge2}\bigl((E_n-E_0)^2-\Delta^2\bigr)^{2}|x_{n0}|^2}{ \displaystyle\max_{n\ge2}\bigl((E_n-E_0)^2-\Delta^2\bigr) }.
\end{align*}
Here $\max_{n\ge2}\bigl((E_n-E_0)^2-\Delta^2\bigr)\le(\Delta+\Lambda)^2-\Delta^2=\Lambda(2\Delta+\Lambda)$,
and the numerator equals $ \Phi ^2$ from \eqref{eq:Phi-right}-\eqref{eq:Phi-left}. Hence,
\[
\varepsilon \ge \frac{1}{\Delta(2\Delta+\Lambda)}
\frac{ \Phi ^2}{\Lambda(2\Delta+\Lambda)}
=\frac{1}{\Delta \Lambda (2\Delta+\Lambda)^2} \frac{\hbar^4}{m^2}  g _{L^2(\rho_0)}^2.
\]
Since $(2\Delta+\Lambda)^2\le 4\Delta(\Delta+\Lambda)$, we obtain the convenient (slightly weaker) explicit form \eqref{eq:UV}:
\[
\varepsilon \ge \frac{\hbar^4}{4m^2} \frac{ g _{L^2(\rho_0)}^2}{\Delta^{3}(\Delta+\Lambda)}.
\]
\hfill$\square$

\subsection{Remarks and corollaries}

The constants in \eqref{eq:D1}-\eqref{eq:D3} are sharp. For example, \eqref{eq:D1} becomes an equality if the whole tail 
sits on a single level with $E_2=E_1+\Gamma$, while levels with $n>2$ are ''sent to infinity'' (an idealized two-level tail).
 Thus the factor in front of $\eta_{\mathrm{TRK}}$ is optimal. The estimate \eqref{eq:D2-tail} is also optimal in order, 
 including in the limit $\Gamma\downarrow 0$.

The scaling matches all constants. Under the affine stretch $x\mapsto\lambda x$ and $V\mapsto V_\lambda(x):=V(\lambda x)$, 
the energy levels scale so that $\Delta\mapsto \lambda^2\Delta$ (with a corresponding rescale of $m$ or time units), 
and formulas \eqref{eq:D1}-\eqref{eq:D3} keep their form.

It is also useful to record a direct consequence for the static dipole polarizability. From the identity
\[
\alpha(0)=2\sum_{n>0}\frac{|x_{n0}|^2}{E_n-E_0}
\]
and from TRK we get
\[
\alpha(0) \le \frac{2}{\Delta}\sum_{n>0}|x_{n0}|^2
=\frac{2 \Var_0(x)}{\Delta}
 \le \frac{\hbar^2}{m \Delta^2},
\]
and equality is possible only for the harmonic potential. In terms of the quantum metric $g_{xx}=\Var_0(x)/\hbar^2$ this gives 
the sharp bound $g_{xx}\le (2m\Delta)^{-1}$. The same quantitative rigidity as in \eqref{eq:D1}-\eqref{eq:D3}.

In a uniform magnetic field all statements have verbatim analogues for the transverse coordinate / guiding center $R_u$ 
(see §\ref{sec:magnetic-transverse}). It suffices to make the replacements
\[
x \leadsto R_u,\qquad
S \leadsto S_{R_u}=\frac{\ell_B^4}{2}\Big\langle w^{ \top}(\nabla\nabla V)w\Big\rangle_0,\qquad
\Delta \leadsto \Delta_{R_u},\qquad
\Gamma \leadsto \Gamma_{R_u}.
\]
After these substitutions one obtains
\[
\varepsilon_u:=\frac{S_{R_u}}{\Delta_{R_u}}-\Var_0(R_u) \ge
\frac{S_{R_u} \Gamma_{R_u}}{\Delta_{R_u}(\Delta_{R_u}+\Gamma_{R_u})}\ \eta_{\mathrm{TRK}}^{(R_u)},
\qquad
\sum_{n\ge2}|\langle\psi_n,R_u \psi_0 \rangle|^2 \le \frac{\Delta_{R_u}}{\Gamma_{R_u}} \varepsilon_u,
\]
and the structural estimate \eqref{eq:D3} is rewritten with the replacement $g\mapsto w^{ \top}(\nabla\nabla V)w-\langle\cdot\rangle_0$.

Finally, the practical meaning of \eqref{eq:D1}-\eqref{eq:D3} is that they provide a gauge of anharmonicity at a fixed 
working gap $\Delta$. The deficit $\varepsilon$ quantitatively measures the trap's ''non-harmonicity'': it 
controls the spectral tail via $\eta_{\mathrm{TRK}}$ and the structural deviation of $V'$ via \eqref{eq:D3} and \eqref{eq:UV}. 
In particular, for a given $\Delta$ no anharmonic trap can exceed the limit $\Var_0(x)=\hbar^2/(2m\Delta)$, and small 
$\varepsilon$ certifies near-harmonic behavior: $T$ and $\eta_{\mathrm{TRK}}$ tend to zero as in \eqref{eq:D2-tail}, 
while the force deviation is bounded through \eqref{eq:D3} and improved by \eqref{eq:UV}.

\subsection{Multidimensional case $d>1$ without a magnetic field}

\label{sec:dgt1-nofield}

Let $H=\frac{p^2}{2m}+V(\mathbf x)$ on $\mathbb R^d$, with $V\in C^2$ confining; the ground state is non-degenerate, 
$\psi_0>0$, and the spectrum is discrete. For a unit vector $u\in\mathbb S^{d-1}$ set $x_u:=u \cdot \mathbf x$ and define
\[
x^{(u)}_{n0}:=\langle \psi_n, x_u \psi_0\rangle,\qquad
\Delta_u:=\min\{ E_n-E_0: x^{(u)}_{n0}\neq 0 \}  (\Delta_u\ge E_1-E_0).
\]
Then we have the sharp bound
\begin{equation}\label{eq:var-multi-nofield}
\Var_0(x_u) \le \frac{\hbar^2}{2m \Delta_u},
\end{equation}
and equality in \eqref{eq:var-multi-nofield} holds if and only if the potential is quadratic along $u$:
\begin{equation}\label{eq:V-quad-along-u}
V(\mathbf x)=\frac{m\omega^2}{2} x_u^2+W(x_u^\perp)+C,\qquad \hbar\omega=\Delta_u,
\end{equation}
where $x_u^\perp:=\mathbf x-(u \cdot \mathbf x) u$, $W$ is any confining function of the transverse variable only, 
and $C\in\mathbb R$.

The proof repeats the one-dimensional case. The TRK sum rule for $x_u$ gives
\[
\sum_{n>0}(E_n-E_0) |x^{(u)}_{n0}|^2=\frac{\hbar^2}{2m},
\]
and the standard ''Chebyshev'' step with the active gap $\Delta_u$ yields \eqref{eq:var-multi-nofield} at once. If 
equality holds, then as in $d=1$ the vector $x_u\psi_0$ lies in the active subspace: $x_u\psi_0\in\mathrm{span}\{\psi_{1,u}\}$, 
whence
\[
[H,[H,x_u]] \psi_0=(H-E_0)^2x_u\psi_0=\Delta_u^2 x_u\psi_0.
\]
Since $[H,[H,x_u]]=(\hbar^2/m) \partial_u V$, we get $\partial_u V(\mathbf x)=m\omega^2 x_u$ on the support of $\rho_0$, 
and integrating in $x_u$ at fixed $x_u^\perp$ gives \eqref{eq:V-quad-along-u}. The converse check is immediate by plugging 
into \eqref{eq:var-multi-nofield}.

\section{Uniform magnetic field: parallel direction}
\label{sec:magnetic-parallel}

Let $\mathbf B=B \hat{\mathbf b}$ be a uniform field, with $q\neq0$,
\[
H=\frac{\boldsymbol{\pi}^2}{2m}+V(\mathbf x),\qquad
\boldsymbol{\pi}:=\mathbf p-q \mathbf A(\mathbf x),\qquad
\nabla\times\mathbf A=\mathbf B,
\]
where $V\in C^2(\mathbb R^d)$ is confining. The ground state is non-degenerate, and $\psi_0>0$.

{{Throughout we work in a global Euclidean setting ($\mathbb{R}^d$) with a globally defined smooth vector potential $\mathbf A$,
so no gauge patching, bundle topology, or boundary/holonomy subtleties enter.
Extensions to manifolds or nontrivial $U(1)$ line bundles would require a reformulation in terms of globally defined
objects and a careful control of domains/boundary contributions. These are outside our present scope.}}

Define the parallel objects
\[
x_\parallel:=\hat{\mathbf b} \cdot \mathbf x,\qquad
\partial_\parallel:=\hat{\mathbf b} \cdot\nabla .
\]
(Domains and commutator validity are in App.~A.) We use the gauge-covariant commutators
\[
[x_i,\pi_j]=i\hbar \delta_{ij},\qquad
[\pi_i,\pi_j]=i\hbar q \varepsilon_{ijk}B_k,\qquad
[V(\mathbf x),\pi_i]=i\hbar \partial_i V(\mathbf x).
\]
Hence
\begin{equation}\label{eq:Hxpar}
[H,x_\parallel]
=\frac{1}{2m}[\pi^2,x_\parallel]
=- \frac{i\hbar}{m} \hat{\mathbf b} \cdot \boldsymbol{\pi},
\end{equation}
and therefore
\begin{align}
[H,[H,x_\parallel]]
&=- \frac{i\hbar}{m} [H, \hat{\mathbf b} \cdot \boldsymbol{\pi}]
=- \frac{i\hbar}{m}\Bigl(\frac{1}{2m}[\pi^2,\hat{\mathbf b} \cdot \boldsymbol{\pi}]
+[V, \hat{\mathbf b} \cdot \boldsymbol{\pi}]\Bigr).\label{eq:double-par-pre}
\end{align}
The first commutator in \eqref{eq:double-par-pre} vanishes, since
\[
[\pi_i, \hat{\mathbf b} \cdot \boldsymbol{\pi}]
=\hat b_j[\pi_i,\pi_j]
=i\hbar q \varepsilon_{ijk}\hat b_j B_k
=i\hbar q (\hat{\mathbf b}\times\mathbf B)_i=0
\]
(the vector $\hat{\mathbf b}$ is parallel to $\mathbf B$). For the second commutator we have
\(
[V, \hat{\mathbf b} \cdot \boldsymbol{\pi}]
=i\hbar \hat{\mathbf b} \cdot\nabla V
=i\hbar \partial_\parallel V.
\)

Thus we obtain the exact identity
\begin{equation}\label{eq:double-par}
\boxed{\quad
[H,[H,x_\parallel]]=\frac{\hbar^2}{m} \partial_\parallel V(\mathbf x).
\quad}
\end{equation}

From \eqref{eq:Hxpar} we also immediately get the TRK equality along the field:
\[
\sum_{n>0}(E_n-E_0) |\langle\psi_n, x_\parallel \psi_0\rangle|^2
=\tfrac12 \langle [x_\parallel,[H,x_\parallel]]\rangle_0
=\frac{\hbar^2}{2m},
\]
so the master bound \eqref{eq:masterA} for $A=x_\parallel$ takes the form
\begin{equation}\label{eq:var-par-bound}
\Var_0(x_\parallel)\ \le\ \frac{\hbar^2}{2m \Delta_\parallel},
\end{equation}
where 

\begin{equation}\label{eq:gap-par}
\Delta_\parallel:=\min\{E_n-E_0:\ \langle\psi_n, x_\parallel \psi_0\rangle\neq0\}.
\end{equation}

\begin{theorem}[Longitudinal \emph{iff} in a uniform field]\label{thm:iff-par}
The bound \eqref{eq:var-par-bound} holds, and equality is achieved \emph{if and only if}
\[
V(\mathbf x)=\tfrac12 m\omega^2 x_\parallel^2+W(\mathbf x_\perp)+C,\qquad
\hbar\omega=\Delta_\parallel,
\]
where $\mathbf x_\perp:=\mathbf x-(\hat{\mathbf b} \cdot \mathbf x) \hat{\mathbf b}$, $W$ is a confining function of 
the transverse variable, and $C\in\mathbb R$.
\end{theorem}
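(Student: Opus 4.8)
The plan is to run the master inequality \eqref{eq:masterA} for the observable $A=x_\parallel$, whose active gap is $\Delta_\parallel$ as in \eqref{eq:gap-par}, and then use the exact commutator identity \eqref{eq:double-par} to convert the equality case into a pointwise condition on $\partial_\parallel V$. The upper bound \eqref{eq:var-par-bound} has already been derived above from the longitudinal TRK equality $S_{x_\parallel}=\hbar^2/(2m)$ combined with the Chebyshev step, so only the \emph{iff} needs proof.

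For the forward direction I would argue exactly as in the 1D case of §\ref{sec:theoremA}. Equality in \eqref{eq:masterA} forces $x_\parallel\psi_0$ to lie entirely in the active subspace, i.e.\ every term with $E_n-E_0>\Delta_\parallel$ vanishes, so $(H-E_0)x_\parallel\psi_0=\Delta_\parallel x_\parallel\psi_0$ and hence
\[
[H,[H,x_\parallel]]\psi_0=(H-E_0)^2 x_\parallel\psi_0=\Delta_\parallel^2\, x_\parallel\psi_0 .
\]
Combining this with the operator identity \eqref{eq:double-par} gives $\dfrac{\hbar^2}{m}\,\partial_\parallel V(\mathbf x)\,\psi_0=\Delta_\parallel^2\, x_\parallel\,\psi_0$, and since $\psi_0>0$ a.e.\ we may divide to obtain $\partial_\parallel V(\mathbf x)=m\omega^2 x_\parallel$ with $\omega=\Delta_\parallel/\hbar$, valid on the support of $\rho_0=|\psi_0|^2$ (which is all of $\mathbb R^d$ by positivity). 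Integrating this ODE in the $x_\parallel$-variable at fixed $\mathbf x_\perp$ yields $V(\mathbf x)=\tfrac12 m\omega^2 x_\parallel^2+W(\mathbf x_\perp)+C$, where the ``constant of integration'' $W(\mathbf x_\perp)$ is an arbitrary function of the transverse variables, confining because $V$ is.

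For the converse, assume $V$ has the stated split form. Then $H=H_\parallel\otimes \mathbb 1+\mathbb 1\otimes H_\perp$ where $H_\parallel=\pi_\parallel^2/(2m)+\tfrac12 m\omega^2 x_\parallel^2$ is a 1D harmonic oscillator in the longitudinal variable — note that $\pi_\parallel=\hat{\mathbf b}\cdot\boldsymbol\pi$ commutes with the transverse kinetic/potential block because $\hat{\mathbf b}\times\mathbf B=0$ kills the magnetic cross-terms, so the separation is genuine. The ground state factorizes, $\psi_0=\phi_0^{(\parallel)}\otimes\chi_0^{(\perp)}$, and $x_\parallel$ acts only on the first factor, giving $\Var_0(x_\parallel)=\Var_{\phi_0}(x_\parallel)=\hbar/(2m\omega)=\hbar^2/(2m\Delta_\parallel)$, with the active gap indeed equal to $\hbar\omega$ since $x_\parallel\phi_0^{(\parallel)}\propto\phi_1^{(\parallel)}$. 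This saturates \eqref{eq:var-par-bound}.

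The main obstacle is the bookkeeping around $\Delta_\parallel$ versus $E_1-E_0$: unlike the symmetric 1D situation, the true first gap may come from a transverse excitation, so one must work consistently with the \emph{active} gap throughout, and in the converse check verify that $x_\parallel\psi_0$ has nonzero overlap only with the first longitudinal excited state (so that $\Delta_\parallel=\hbar\omega$ rather than something smaller). A secondary technical point is justifying the division by $\psi_0$ and the closedness of the double commutator on the relevant domain in the presence of $\mathbf A$; this is deferred to Appendix~A, and positivity of $\psi_0$ (stated above) is what makes the pointwise conclusion $\partial_\parallel V=m\omega^2 x_\parallel$ hold on all of $\mathbb R^d$ rather than merely a.e.\ on a proper subset.
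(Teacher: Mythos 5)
Your proposal is correct and follows essentially the same route as the paper: the forward direction via equality in the master bound, the active-subspace condition, the identity \eqref{eq:double-par}, positivity of $\psi_0$, and integration in $x_\parallel$; the converse via separation of $H$ into a longitudinal harmonic oscillator plus a transverse block and factorization of $\psi_0$ (the paper makes the separation explicit by choosing a gauge with $A_\parallel=0$, which is the same point you make through the vanishing of the magnetic cross-terms $\hat{\mathbf b}\times\mathbf B=0$). Your remark that one must check $\Delta_\parallel=\hbar\omega$ in the converse matches the paper's observation that the longitudinal TRK sum is saturated by the single $0\to1$ transition.
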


Equality in \eqref{eq:var-par-bound} is equivalent to $x_\parallel\psi_0$
lying in the active subspace. Hence $(H-E_0)^2 x_\parallel\psi_0=\Delta_\parallel^2 x_\parallel\psi_0$.
On the other hand, \eqref{eq:double-par} gives
$[H,[H,x_\parallel]]\psi_0=(\hbar^2/m) \partial_\parallel V \psi_0$. Using $\psi_0>0$ we get $\partial_\parallel V=m\omega^2 x_\parallel$, with $\omega=\Delta_\parallel/\hbar$.
Integrating in $x_\parallel$ at fixed $\mathbf x_\perp$ yields the required quadratic form along the field; the converse 
follows by direct substitution.

Besides,
\begin{equation}\label{eq:trk-par-const}
[x_\parallel,[H,x_\parallel]]
=- \frac{i\hbar}{m} [x_\parallel, \hat{\mathbf b} \cdot \boldsymbol{\pi}]
=- \frac{i\hbar}{m} i\hbar
=\frac{\hbar^2}{m}.
\end{equation}
Hence the $f$-sum (TRK) for $x_\parallel$ has the same constant as without the field:
\begin{equation}\label{eq:trk-par}
\sum_{n>0}(E_n-E_0) \bigl|\langle\psi_n, x_\parallel \psi_0\rangle\bigr|^2
=\frac12 \big\langle [x_\parallel,[H,x_\parallel]]\big\rangle_0
=\frac{\hbar^2}{2m}.
\end{equation}
This is consistent with the longitudinal form of the harmonic-potential theorem \cite{Kohn1961,Dobson1994HPT}.

Assume \eqref{eq:var-par-bound} holds. In a uniform field one can choose a gauge (e.g., the symmetric gauge) with $A_\parallel=0$,
 so that
\[
H=\Bigl(\frac{p_\parallel^{2}}{2m}+\tfrac12 m\omega^{2}x_\parallel^{2}\Bigr)
 + \Bigl(\frac{\boldsymbol{\pi}_\perp^{2}}{2m}+W(\mathbf x_\perp)\Bigr)+C,
\qquad \hbar\omega=\Delta_\parallel,
\]
i.e., the Hamiltonian splits into a longitudinal harmonic part and a transverse part. Therefore the ground state factorizes
 as $\psi_0(x_\parallel,\mathbf x_\perp)=\phi_0(x_\parallel) \chi_0(\mathbf x_\perp)$, and the matrix elements for 
 $x_\parallel$ decompose:
\[
\langle\psi_n, x_\parallel \psi_0\rangle
=\langle\phi_{n_\parallel}, x_\parallel \phi_0\rangle \cdot \langle\chi_{n_\perp}, \chi_0\rangle.
\]
Thus $x_\parallel$ does not excite the transverse subsystem ($\langle\chi_{n_\perp},\chi_0\rangle=\delta_{n_\perp,0}$), 
and only the longitudinal harmonic mode remains active, where the single nonzero matrix element is 
the transition $0\to1$ with energy $\hbar\omega=\Delta_\parallel$. The TRK along the field is therefore saturated by one
 term, and the variance equals
\[
\Var_0(x_\parallel)=\frac{\hbar}{2m\omega}=\frac{\hbar^2}{2m \Delta_\parallel},
\]
which gives equality in \eqref{eq:var-par-bound}. The theorem is proved. \hfill$\square$

\begin{lemma}[Equality equivalences for $x_\parallel$]\label{lem:eq-par}
The following statements are equivalent:
\begin{enumerate}\itemsep0.25em
\item $\Var_0(x_\parallel)=\hbar^2/(2m\Delta_\parallel)$;
\item $x_\parallel\psi_0\in\mathrm{span}\{\psi^\parallel_1\}$, i.e., $x_\parallel$ excites a single longitudinal mode with 
gap $\Delta_\parallel$;
\item the TRK sum \eqref{eq:trk-par} is saturated by the single transition $0 \to 1_\parallel$;
\item $[H,[H,x_\parallel]] \psi_0=\Delta_\parallel^2 x_\parallel\psi_0$, i.e., $\partial_\parallel V$ is linear in 
$x_\parallel$ on the support of $\rho_0$.
\end{enumerate}
\end{lemma}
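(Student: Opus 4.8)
\emph{Proof plan.} The plan is to prove the four-way equivalence by running the cyclic chain $(1)\Rightarrow(2)\Rightarrow(3)\Rightarrow(4)\Rightarrow(1)$, mirroring the equality criteria established in the field-free case of \S\ref{sec:theoremA}. The only ingredients are the longitudinal TRK identity \eqref{eq:trk-par}, the constant double commutator \eqref{eq:trk-par-const}, the exact identity \eqref{eq:double-par}, the spectral expansion of $\Var_0(x_\parallel)$, and positivity $\psi_0>0$; all domain matters are deferred to App.~A. Throughout I read (2) as the assertion $x_\parallel\psi_0\in\mathrm{Ran}\,P_{E_0+\Delta_\parallel}$, i.e. membership in the active eigenspace, and only at the end verify that this space is one-dimensional so that the displayed ``$\mathrm{span}\{\psi_1^\parallel\}$'' is literally correct.

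For $(1)\Rightarrow(2)$ I would invoke the equality clause of the master bound \eqref{eq:masterA} for $A=x_\parallel$ (whose sum-rule constant is $\hbar^2/(2m)$ by \eqref{eq:trk-par}): the equality criterion recorded in \S\ref{sec:setup} already says $x_\parallel\psi_0$ lies entirely in the eigenspace at energy $E_0+\Delta_\parallel$; concretely, the Chebyshev step $\sum_{n>0}|\langle\psi_n,x_\parallel\psi_0\rangle|^2\le\Delta_\parallel^{-1}\sum_{n>0}(E_n-E_0)|\langle\psi_n,x_\parallel\psi_0\rangle|^2$ is tight only if $\bigl((E_n-E_0)-\Delta_\parallel\bigr)\,|\langle\psi_n,x_\parallel\psi_0\rangle|^2=0$ for every $n>0$. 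For $(2)\Rightarrow(3)$, membership in the active eigenspace gives $(H-E_0)x_\parallel\psi_0=\Delta_\parallel x_\parallel\psi_0$, so the left side of \eqref{eq:trk-par} collapses to $\Delta_\parallel\sum_{n>0}|\langle\psi_n,x_\parallel\psi_0\rangle|^2$, i.e. the sum is exhausted by the single transition $0\to1_\parallel$. For $(3)\Rightarrow(4)$, saturation again forces $x_\parallel\psi_0$ to be an eigenvector of $H-E_0$ with eigenvalue $\Delta_\parallel$; since $(H-E_0)\psi_0=0$ this reads $[H,[H,x_\parallel]]\psi_0=(H-E_0)^2x_\parallel\psi_0=\Delta_\parallel^2 x_\parallel\psi_0$, and then \eqref{eq:double-par} converts it into $\partial_\parallel V=(m\Delta_\parallel^2/\hbar^2)x_\parallel=m\omega^2 x_\parallel$ on $\{\psi_0>0\}$, hence everywhere by continuity.

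Finally, for $(4)\Rightarrow(1)$ I would integrate $\partial_\parallel V=m\omega^2 x_\parallel$ in $x_\parallel$ at fixed $\mathbf x_\perp$ to recover the structure $V=\tfrac12 m\omega^2 x_\parallel^2+W(\mathbf x_\perp)+C$ of \Cref{thm:iff-par}, and then quote the converse direction of that theorem: choosing a gauge with $A_\parallel=0$ (available because $\mathbf B$ is uniform) splits $H$ into a longitudinal harmonic oscillator plus a transverse part, $\psi_0=\phi_0(x_\parallel)\chi_0(\mathbf x_\perp)$, whence $\Var_0(x_\parallel)=\hbar/(2m\omega)=\hbar^2/(2m\Delta_\parallel)$. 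The same factorization shows $x_\parallel\psi_0=(x_\parallel\phi_0)\chi_0$ with $x_\parallel\phi_0\propto\phi_1$ by oscillator parity, so $\mathrm{Ran}\,P_{E_0+\Delta_\parallel}=\mathrm{span}\{\psi_1^\parallel\}$ with $\psi_1^\parallel=\phi_1(x_\parallel)\chi_0(\mathbf x_\perp)$, which upgrades (2) to its stated form and closes the loop. I do not anticipate a genuine obstacle; the one point that needs care is exactly that the active level $E_0+\Delta_\parallel$ should not be assumed non-degenerate from the outset---its one-dimensionality is a \emph{consequence} of the equality, obtained only after reaching (4)---together with the routine App.~A checks that \eqref{eq:double-par} may be applied to $\psi_0$ and that the $A_\parallel=0$ gauge exists in a uniform field.
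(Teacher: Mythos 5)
Your proof is correct and follows the same skeleton the paper uses for the field-free equivalences in \S\ref{sec:theoremA} together with the proof of \Cref{thm:iff-par}: the chain $(1)\Rightarrow(2)\Rightarrow(3)\Rightarrow(4)$ is exactly the paper's Chebyshev-tightness argument plus $(H-E_0)^2x_\parallel\psi_0=[H,[H,x_\parallel]]\psi_0$ and the identity \eqref{eq:double-par} with $\psi_0>0$. The one place you diverge is the closing implication $(4)\Rightarrow(1)$: the paper's 1D prototype closes the loop purely spectrally (pair $[H,[H,x]]\psi_0=\Delta^2x\psi_0$ with $x\psi_0$ and use the TRK constant, giving $2\Delta\Var_0=\hbar^2/m$ with no structural input), whereas you integrate $\partial_\parallel V=m\omega^2x_\parallel$ and invoke the converse of \Cref{thm:iff-par} (choice of a gauge with $A_\parallel=0$, separation, factorization of $\psi_0$). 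Your route is valid and has the side benefit of identifying $\psi_1^\parallel=\phi_1\chi_0$ and the one-dimensionality of the active channel explicitly, but it imports the gauge/separability machinery that the purely spectral closure avoids; the leaner argument also generalizes verbatim to the inhomogeneous-$B$ setting of \Cref{thm:iff-par-inhomB}, where the $A_\parallel=0$ splitting is less immediate. One small shared caveat: both your statement of $(1)\Rightarrow(2)$ and the paper's implicitly use the normalization $\langle x_\parallel\rangle_0=0$ (otherwise $x_\parallel\psi_0$ retains a $\psi_0$-component and (4) reads $\partial_\parallel V=m\omega^2(x_\parallel-\langle x_\parallel\rangle_0)$); a sentence fixing this centering convention would make the equivalences airtight.
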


\begin{remark}[On the choice of gap]
Using $\Delta_\parallel$ from \eqref{eq:gap-par} is essential: the global first gap $E_1-E_0$ may belong to a transverse 
(cyclotron) mode that $x_\parallel$ does not excite; then $\Delta_\parallel>E_1-E_0$ and \eqref{eq:var-par-bound} becomes 
sharper. In any case, the bound with $E_1-E_0$ remains valid but is not always optimal.
\end{remark}

\begin{remark}[TRK constant is unchanged]
Formula \eqref{eq:trk-par-const} shows that for $x_\parallel$ the $f$-sum has the same constant $\hbar^2/(2m)$ as at $B=0$. 
This reflects the fact that a magnetic field does no work along its own direction and does not affect the longitudinal 
''dipole'' $f$-sum. See also the generalized HPT formula for time-dependent fields \cite{LaiPan2016SciRep}.
\end{remark}

\subsection{Across the field: guiding center $R_\perp$}
\label{sec:magnetic-transverse}

Let $\mathbf B=B \hat{\mathbf b}$ be uniform, with $q\neq0$. Introduce the kinetic momentum 
$\boldsymbol{\pi}=\mathbf p-q \mathbf A(\mathbf x)$, so that $[\pi_i,\pi_j]=i\hbar q \varepsilon_{ijk}B_k$, 
and the magnetic length $\ell_B^2:=\hbar/(|q|B)$. Let $P_\perp:=I-\hat{\mathbf b}\hat{\mathbf b}^{ \top}$ be the projector 
onto the plane perpendicular to the field. 

Define the guiding center (in the direction transverse to the field) by
\begin{equation}\label{eq:Rperp-def}
\mathbf R_\perp := P_\perp \left(\mathbf x+\frac{1}{qB} \hat{\mathbf b}\times\boldsymbol{\pi}\right)
 = P_\perp \left(\mathbf x+\mathrm{sgn}(q) \frac{\ell_B^2}{\hbar} \hat{\mathbf b}\times\boldsymbol{\pi}\right).
\end{equation}
For any unit $u\perp\hat{\mathbf b}$ set $R_u:=u \cdot \mathbf R_\perp$, and $w:=u\times\hat{\mathbf b}$ 
(this is also a unit vector in $P_\perp$).

Direct calculations for a uniform field give the basic relations

\begin{equation}\label{eq:comm-basic-R}
[\pi_i, R_{\perp,j}]=0,\qquad
[R_{\perp,i}, R_{\perp,j}]
= i \mathrm{sgn}(q) \ell_B^2 \varepsilon_{ijk} \hat b_k,\qquad
[H_0:=\boldsymbol{\pi}^2/2m, \mathbf R_\perp]=0.
\end{equation}

For any smooth $f(\mathbf x)$ and any $u\perp\hat{\mathbf b}$ we have

\begin{equation}\label{eq:R-f-comm}
[R_u, f(\mathbf x)]
=\mathrm{sgn}(q) \frac{\ell_B^2}{\hbar} [ u \cdot(\hat{\mathbf b}\times\boldsymbol{\pi}), f ]
=- i \mathrm{sgn}(q) \ell_B^2 w \cdot\nabla f,
\end{equation}
and in particular,

\begin{equation}\label{eq:H-Ru}
[H, R_u]=[V, R_u]=i \mathrm{sgn}(q) \ell_B^2 w \cdot\nabla V(\mathbf x).
\end{equation}

Applying \eqref{eq:R-f-comm} to $f=w \cdot\nabla V$ we get

\begin{equation}\label{eq:double-Ru}
\boxed{\quad
[R_u, [H, R_u]]
=i \mathrm{sgn}(q) \ell_B^2 [R_u, w \cdot\nabla V]
=\ell_B^4 (w^{ \top}(\nabla\nabla V)w),
\quad}
\end{equation}
i.e., a pure multiplier equal to the projection of the Hessian of $V$ onto $w$. Therefore $R_u$ satisfies the exact $f$-sum rule

\begin{equation}\label{eq:trk-Ru}
\sum_{n>0}(E_n-E_0) \bigl|\langle \psi_n, R_u \psi_0\rangle\bigr|^2
=\frac12 \big\langle [R_u,[H,R_u]]\big\rangle_0
=\frac{\ell_B^4}{2} \Big\langle w^{ \top}(\nabla\nabla V)w \Big\rangle_0
=:S_{R_u}.
\end{equation}

Let us define
\begin{equation}\label{eq:Delta-Ru}
\Delta_{R_u}:=\min\{ E_n-E_0: \langle \psi_n, R_u \psi_0\rangle\neq0 \}.
\end{equation}

Then, combining the ''Chebyshev step'' with the active gap and the TRK formula \eqref{eq:trk-Ru}, we obtain
\begin{equation}\label{eq:var-Ru-bound}
\boxed{\qquad
\Var_0(R_u) \le \frac{1}{\Delta_{R_u}}\sum_{n>0}(E_n-E_0) \bigl|\langle \psi_n, R_u \psi_0\rangle\bigr|^2
=\frac{\ell_B^4}{2 \Delta_{R_u}} 
\Big\langle w^{ \top}(\nabla\nabla V)w \Big\rangle_0.
\qquad}
\end{equation}

{{Importantly, the upper bound \eqref{eq:var-Ru-bound} itself does not require any pointwise
(constant-curvature) assumption.
Indeed, using the exact identity \eqref{eq:double-Ru} and taking the ground-state expectation, one has
\[
S_{R_u}=\tfrac12\langle[ R_u,[H,R_u] ]\rangle_0
=\frac{\ell_B^{4}}{2} \big\langle w^{\top}(\nabla\nabla V)w\big\rangle_0.
\]
So \eqref{eq:var-Ru-bound} is controlled solely by the ground-state averaged projected curvature.
The corresponding quantitative rigidity estimates \eqref{eq:D1-Ru}-\eqref{eq:D2-Ru}
are governed by the same averaged quantity.}}

Equality in \eqref{eq:var-Ru-bound} is equivalent to $R_u\psi_0$ lying in the active subspace,
i.e., $R_u\psi_0\in\mathrm{span}\{\psi_{1,u}\}$ where $E_{1,u}-E_0=\Delta_{R_u}$ and a single mode is excited. In this case
\[
2\Delta_{R_u} \Var_0(R_u)=\big\langle [R_u,[H,R_u]]\big\rangle_0,
\]
and from \eqref{eq:double-Ru} we get the necessary condition that $w^{ \top}(\nabla\nabla V)w$ must be 
constant on average (and if it is in fact constant on the support of $\rho_0$, then it is a constant as a function). 
In particular, if equality holds for two linearly independent directions $u_1,u_2\perp\hat{\mathbf b}$ 
(and thus for two independent $w_k=u_k\times\hat{\mathbf b}$), then the curvature of $V$ is constant throughout the plane
 $P_\perp$, and therefore
\begin{equation}\label{eq:V-quadr-Perp}
V(\mathbf x)=\frac{m}{2}\big(\Omega_x^2 x_\perp^2+\Omega_y^2 y_\perp^2\big)
+W(x_\parallel)+C,
\end{equation}
i.e., $V$ is quadratic on $P_\perp$ (with arbitrary dependence along the field).

\begin{remark}
Unlike the longitudinal case (§\ref{sec:magnetic-parallel}), the operator $[H,[H,R_u]]$ is not a multiplier 
(it contains $\boldsymbol{\pi}$), so an exact ''iff'' statement in the transverse direction requires a separate analysis.
 Formula \eqref{eq:double-Ru} is useful because it yields a multiplier precisely in the TRK commutator $[R_u,[H,R_u]]$, 
 which is sufficient for the sharp bound and for the quantitative rigidity below.
\end{remark}

Let \(\mathbf B=B\hat{\mathbf b}\) be uniform, \(q\neq0\); fix unit vectors \(u\perp\hat{\mathbf b}\) and \(w:=u\times\hat{\mathbf b}\).

{{We stress that the upper bound \eqref{eq:var-Ru-bound} and the rigidity estimates
\eqref{eq:D1-Ru}-\eqref{eq:D2-Ru} require only the ground-state averaged curvature
$\langle w^{\top}(\nabla\nabla V)w\rangle_0$.
The additional pointwise structural assumptions introduced below are used
solely to convert the spectral saturation condition
($R_u\psi_0$ lying entirely in the active subspace)
into an explicit necessary and sufficient statement on the potential $V$.}}

Assume on the support of \(\rho_0\) the two structural hypotheses hold: (i) constant transverse curvature along \(w\),
\[
w^\top(\nabla\nabla V)w \equiv m\Omega^2 \quad(\text{constant}),
\]
and (ii) no transverse ''mixing'' along \(w\): \(\partial_u\partial_w V\equiv0\) and \(\partial_\parallel\partial_w V\equiv0\). Equivalently,
\[
V(\mathbf x)=\frac{m\Omega^2}{2} \bigl(w \cdot \mathbf x_\perp\bigr)^2+W \bigl(u \cdot \mathbf x_\perp, x_\parallel\bigr),
\qquad \mathbf x_\perp:=\mathbf x-(\hat{\mathbf b} \cdot \mathbf x) \hat{\mathbf b}.
\]
Then for the guiding-center component \(R_u:=u \cdot \mathbf R_\perp\) we have the sharp bound
\[
\Var_0(R_u) \le \frac{\ell_B^4}{2 \Delta_{R_u}} 
\big\langle w^\top(\nabla\nabla V)w\big\rangle_0
=\frac{\ell_B^4 m\Omega^2}{2 \Delta_{R_u}},
\]
and equality holds \emph{if and only if} \(\Delta_{R_u}=\hbar\Omega\) (that is, the pair \((R_u, u \cdot \boldsymbol{\pi})\) realizes a 
single harmonic mode) and \(R_u\psi_0\in\mathrm{span}\{\psi_{1,u}\}\).

Moreover, if equality holds for \emph{two linearly independent} directions \(u_1,u_2\perp\hat{\mathbf b}\), then \(V\) is quadratic on the 
whole transverse plane \(P_\perp\):
\[
V(\mathbf x)=\frac{m}{2}\bigl(\Omega_x^2 x_{\perp,x}^2+\Omega_y^2 x_{\perp,y}^2\bigr)+W(x_\parallel)+C,
\]
and in that case saturation holds for every \(u\perp\hat{\mathbf b}\).

\medskip
\begin{proof}
From \eqref{eq:double-Ru} we have the exact TRK double-commutator
\(
[R_u,[H,R_u]]=\ell_B^4  w^{ \top}(\nabla\nabla V)w.
\)
Under (i) it equals the constant \(\ell_B^4 m\Omega^2\), hence
\[
S_{R_u}=\tfrac12\langle [R_u,[H,R_u]]\rangle_0=\frac{\ell_B^4 m\Omega^2}{2}.
\]
Therefore, the master bound for \(A=R_u\) (see \eqref{eq:masterA}) gives
\(
\Var_0(R_u)\le S_{R_u}/\Delta_{R_u}=\ell_B^4 m\Omega^2/(2\Delta_{R_u}),
\)
which is the desired upper bound.

Next we use the equality criteria from §\ref{sec:setup}: in \eqref{eq:masterA} equality is equivalent to
\(
R_u\psi_0\in \mathrm{Ran} P_{E_0+\Delta_{R_u}}
\)
(the entire weight of \(R_u\psi_0\) lies in the active subspace) and, consequently,
\(
(H-E_0) R_u\psi_0=\Delta_{R_u} R_u\psi_0.
\)
On the other hand,
\(
[H,R_u]=i \mathrm{sgn}(q) \ell_B^2  w \cdot\nabla V
\)
(see \eqref{eq:H-Ru}). Assumptions (ii) and (i) imply
\(
w \cdot\nabla V = \partial_{w}V = m\Omega^2 (w \cdot\mathbf x_\perp),
\)
and therefore
\begin{equation}\label{eq:comm-eig}
(H-E_0) R_u\psi_0  =  [H,R_u]\psi_0  =  i \mathrm{sgn}(q) \ell_B^2  m\Omega^2 (w \cdot\mathbf x_\perp) \psi_0 .
\end{equation}
\end{proof}

Applying \((H-E_0)\) to \eqref{eq:comm-eig} once more and using \([R_u,[H,R_u]]=\ell_B^4 m\Omega^2\), we obtain
\[
(H-E_0)^2 R_u\psi_0
= \ell_B^4 m\Omega^2  R_u\psi_0 ,
\]
i.e., \(R_u\psi_0\) is an eigenvector of \((H-E_0)^2\) with eigenvalue \(\ell_B^4 m\Omega^2\).
Comparing with the equality case in \eqref{eq:masterA} (where \((H-E_0)R_u\psi_0=\Delta_{R_u} R_u\psi_0\)), we conclude that
\(
\Delta_{R_u}^2=\hbar^2\Omega^2
\)
and, since \(\Delta_{R_u}>0\), \(\Delta_{R_u}=\hbar\Omega\).
Thus, equality in \eqref{eq:var-Ru-bound} holds \(\Longleftrightarrow\)
\(\Delta_{R_u}=\hbar\Omega\) and \(R_u\psi_0\) is entirely supported in the active subspace, as required.

If equality is attained for two linearly independent \(u_1,u_2\perp\hat{\mathbf b}\), then by the same steps
\(w_k^{ \top}(\nabla\nabla V)w_k\equiv \text{const}\) for \(k=1,2\), and the mixed derivatives in assumption (ii) vanish. This forces the 
Hessian of \(V\) 
to be constant on the whole plane \(P_\perp\). Hence \(V\) is quadratic on \(P_\perp\) and the bound is saturated for any \(u\perp\hat{\mathbf b}\).

\paragraph{Quantitative rigidity (transverse).}
Define the deficit \(\varepsilon_u:=\dfrac{S_{R_u}}{\Delta_{R_u}}-\Var_0(R_u)\ge0\), the second gap
\[
\Gamma_{R_u}:=\min\{ E_n-E_{1,u}: \langle\psi_n, R_u \psi_0\rangle\neq0, n\ge2 \},
\]
and the TRK tail fraction for \(R_u\):
\begin{equation}\label{eq:eta-Ru}
\eta_{\mathrm{TRK}}^{(R_u)}
:=1-\frac{\Delta_{R_u} |\langle \psi_{1,u}, R_u \psi_0\rangle|^2}{S_{R_u}}\in[0,1].
\end{equation}

Then purely spectral algebra (exactly as in D1-D2) yields explicit constants:
\begin{equation}\label{eq:D1-Ru}
\boxed{\quad
\varepsilon_u \ge \frac{\Gamma_{R_u}}{\Delta_{R_u}(\Delta_{R_u}+\Gamma_{R_u})} S_{R_u} 
\eta_{\mathrm{TRK}}^{(R_u)}
 = \frac{\ell_B^4}{2} 
\frac{\Gamma_{R_u}}{\Delta_{R_u}(\Delta_{R_u}+\Gamma_{R_u})} 
\Big\langle w^{ \top}(\nabla\nabla V)w \Big\rangle_0 
\eta_{\mathrm{TRK}}^{(R_u)}  ,
\quad}
\end{equation}
\begin{equation}\label{eq:D2-Ru}
\boxed{\quad
\sum_{n\ge2}\bigl|\langle \psi_n, R_u \psi_0\rangle\bigr|^2 \le \frac{\Delta_{R_u}}{\Gamma_{R_u}} \varepsilon_u,
\qquad
\eta_{\mathrm{TRK}}^{(R_u)} \le \frac{\Delta_{R_u}(\Delta_{R_u}+\Gamma_{R_u})}{S_{R_u} \Gamma_{R_u}} \varepsilon_u.
\quad}
\end{equation}

In other words, a small deficit quantitatively suppresses both the \(R_u\)-variance tail and the TRK tail for \(R_u\).

\paragraph{Remarks.}
(i) In an isotropic transverse quadratic trap one has \(w^{ \top}(\nabla\nabla V)w\equiv\text{const}\), and the bound \eqref{eq:var-Ru-bound} is sharp. 
(ii) All formulas are consistent with the scaling \(\ell_B^2\propto B^{-1}\). They are also invariant under the replacements \(u\mapsto u'\), \(w=u\times\hat{\mathbf b}\). 
(iii) The longitudinal results of §\ref{sec:magnetic-parallel} are obtained as the degenerate case \(w=0\).

\subsection{Inhomogeneous magnetic field with fixed direction: full ''iff'' along the field}
\label{sec:magnetic-parallel-inhom}

Let $\mathbf B(\mathbf x)=\beta(\mathbf x) \hat{\mathbf b}$ have a fixed direction $\hat{\mathbf b}$ and smooth scalar strength $\beta(\mathbf x)$, and let
 $q\neq0$. Consider
\[
H=\frac{\boldsymbol{\pi}^2}{2m}+V(\mathbf x),\qquad \boldsymbol{\pi}=\mathbf p-q\mathbf A,\qquad \nabla\times \mathbf A=\mathbf B,
\]
and denote $x_\parallel:=\hat{\mathbf b} \cdot \mathbf x$, $\partial_\parallel:=\hat{\mathbf b} \cdot\nabla$, as well as
\[
\Delta_\parallel:=\min\{ E_n-E_0: \langle \psi_n, x_\parallel \psi_0\rangle\neq0 \}.
\]

\begin{theorem}[Sharp longitudinal bound and \emph{iff} for $\mathbf B(\mathbf x)\parallel\hat{\mathbf b}$]\label{thm:iff-par-inhomB}
Always
\begin{equation}\label{eq:var-par-inhomB}
\Var_0(x_\parallel) \le \frac{\hbar^2}{2m \Delta_\parallel}.
\end{equation}
Equality in \eqref{eq:var-par-inhomB} holds \emph{if and only if}
\begin{equation}\label{eq:V-parabolic-inhomB}
V(\mathbf x)=\frac{m\omega^2}{2} x_\parallel^2+W(\mathbf x_\perp)+C,\qquad \hbar\omega=\Delta_\parallel,
\end{equation}
where $\mathbf x_\perp:=\mathbf x-(\hat{\mathbf b} \cdot \mathbf x)\hat{\mathbf b}$, $W$ is any confining function of $\mathbf x_\perp$ only, and $C\in\mathbb R$.
\end{theorem}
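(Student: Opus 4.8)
The plan is to transfer the proof of \Cref{thm:iff-par} essentially verbatim, the only genuinely new point being that spatial variation of the field strength $\beta(\mathbf x)$ does not spoil the single cancellation for which uniformity was used. First I would record the two relevant double commutators. Since $[x_i,\pi_j]=i\hbar\delta_{ij}$ and $[V,\pi_i]=i\hbar\partial_i V$ hold for any smooth $\mathbf A$, the identities \eqref{eq:Hxpar} and \eqref{eq:trk-par-const} are unchanged: $[H,x_\parallel]=-\tfrac{i\hbar}{m}\hat{\mathbf b}\cdot\boldsymbol{\pi}$ and $[x_\parallel,[H,x_\parallel]]=\hbar^2/m$, so the longitudinal $f$-sum constant $S_{x_\parallel}=\tfrac12\langle[x_\parallel,[H,x_\parallel]]\rangle_0=\hbar^2/(2m)$ is independent of both $V$ and $\beta$ — this is what legitimizes $x_\parallel$ as a ``fixed-gap'' observable in the sense of \S\ref{sec:setup}. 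Expanding $[H,[H,x_\parallel]]$ as in \eqref{eq:double-par-pre}, the potential piece still gives $[V,\hat{\mathbf b}\cdot\boldsymbol{\pi}]=i\hbar\partial_\parallel V$; for the kinetic piece I would use $[\pi_i,\pi_j]=i\hbar q\,\varepsilon_{ijk}B_k(\mathbf x)=i\hbar q\,\beta(\mathbf x)\,\varepsilon_{ijk}\hat b_k$, so contracting with $\hat b_j$ gives $\hat b_j[\pi_i,\pi_j]=i\hbar q\,\beta(\mathbf x)\,(\hat{\mathbf b}\times\hat{\mathbf b})_i=0$ \emph{as an operator}, whence $[\boldsymbol{\pi}^2,\hat{\mathbf b}\cdot\boldsymbol{\pi}]=\pi_i(\hat b_j[\pi_i,\pi_j])+(\hat b_j[\pi_i,\pi_j])\pi_i=0$ and the exact identity $[H,[H,x_\parallel]]=(\hbar^2/m)\partial_\parallel V$ of \eqref{eq:double-par} is recovered verbatim. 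The key observation is that the tensor $\varepsilon_{ijk}\hat b_j\hat b_k$ vanishes for a purely geometric reason ($\mathbf B\parallel\hat{\mathbf b}$), so the scalar $\beta(\mathbf x)$ multiplying it may vary freely; uniformity was never needed.

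With $S_{x_\parallel}=\hbar^2/(2m)$ and the active gap $\Delta_\parallel$, the master inequality \eqref{eq:masterA} for $A=x_\parallel$ yields \eqref{eq:var-par-inhomB} at once. For the ``only if'' direction I would invoke the equality criterion of \S\ref{sec:setup}: saturation forces $(H-E_0)x_\parallel\psi_0=\Delta_\parallel(x_\parallel-\langle x_\parallel\rangle_0)\psi_0$, hence $[H,[H,x_\parallel]]\psi_0=(H-E_0)^2x_\parallel\psi_0=\Delta_\parallel^2(x_\parallel-\langle x_\parallel\rangle_0)\psi_0$; comparing with $[H,[H,x_\parallel]]\psi_0=(\hbar^2/m)\partial_\parallel V\,\psi_0$ and dividing by $\psi_0>0$ gives $\partial_\parallel V=m\omega^2(x_\parallel-\langle x_\parallel\rangle_0)$ with $\omega=\Delta_\parallel/\hbar$; integrating in $x_\parallel$ at fixed $\mathbf x_\perp$ and recentering the $x_\parallel$-origin yields \eqref{eq:V-parabolic-inhomB}.

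For the ``if'' direction, if $V$ has the form \eqref{eq:V-parabolic-inhomB} then $[H,[H,x_\parallel]]\psi_0=\hbar^2\omega^2 x_\parallel\psi_0$; expanding $x_\parallel\psi_0=\sum_n\langle\psi_n,x_\parallel\psi_0\rangle\psi_n$ and using $[H,[H,x_\parallel]]\psi_0=\sum_{n>0}(E_n-E_0)^2\langle\psi_n,x_\parallel\psi_0\rangle\psi_n$ forces $(E_n-E_0)^2=\hbar^2\omega^2$ on every active level, so $\Delta_\parallel=\hbar\omega$ and the longitudinal $f$-sum is saturated by transitions of energy $\hbar\omega$, giving $\Var_0(x_\parallel)=\hbar^2/(2m\Delta_\parallel)$ — equality. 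Equivalently, since $\nabla\cdot\mathbf B=0$ forces $\partial_\parallel\beta=0$, i.e.\ $\beta=\beta(\mathbf x_\perp)$, one may pick a gauge with $\hat{\mathbf b}\cdot\mathbf A=0$; then $H$ splits into a longitudinal oscillator plus a transverse block, $\psi_0$ factorizes, and $x_\parallel$ excites only the single longitudinal mode, exactly the mechanism used after \eqref{eq:trk-par} in the uniform case.

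I do not expect a serious obstacle: the whole content sits in the operator identity above, and the subtlety there is the harmless fact that $\beta(\mathbf x)$ multiplies the identically vanishing tensor $\varepsilon_{ijk}\hat b_j\hat b_k$. The remaining care points are routine — the domain and self-adjointness bookkeeping for $\boldsymbol{\pi}$ with non-constant $\mathbf A$ (handled as in App.~A), and the reading of the normal form \eqref{eq:V-parabolic-inhomB} up to a translation of the $x_\parallel$-origin, which is available precisely because $\nabla\cdot\mathbf B=0$ keeps $\beta$ independent of $x_\parallel$.
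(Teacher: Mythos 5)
Your proposal is correct and follows essentially the same route as the paper: the same observation that the cyclotron contribution $\hat b_j[\pi_i,\pi_j]\propto\beta(\mathbf x)\,\varepsilon_{ijk}\hat b_j\hat b_k$ vanishes for purely geometric reasons, giving $[H,[H,x_\parallel]]=(\hbar^2/m)\partial_\parallel V$, then the master bound with the active gap $\Delta_\parallel$, and saturation plus $\psi_0>0$ forcing $\partial_\parallel V$ linear in $x_\parallel$ and hence the form \eqref{eq:V-parabolic-inhomB}. Your converse via the spectral expansion of $(H-E_0)^2x_\parallel\psi_0$ (and your explicit handling of the $\langle x_\parallel\rangle_0$ recentering) is just a slightly more detailed version of the paper's ``direct substitution,'' not a different method.
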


\begin{proof}
As in the homogeneous case, $[H,x_\parallel]=-\frac{i\hbar}{m} \hat{\mathbf b} \cdot \boldsymbol{\pi}$, hence
\[
[H,[H,x_\parallel]]
=-\frac{i\hbar}{m} [H, \hat{\mathbf b} \cdot \boldsymbol{\pi}]
=-\frac{i\hbar}{m}\Bigl(\frac{1}{2m} [\boldsymbol{\pi}^2, \hat{\mathbf b} \cdot \boldsymbol{\pi}]+[V, \hat{\mathbf b} \cdot \boldsymbol{\pi}]\Bigr).
\]
Since $\mathbf B(\mathbf x)=\beta(\mathbf x)\hat{\mathbf b}$, we have
\[
[\pi_i, \hat{\mathbf b} \cdot \boldsymbol{\pi}]=i\hbar q \varepsilon_{ijk}\hat b_j B_k(\mathbf x)
=i\hbar q \beta(\mathbf x) \varepsilon_{ijk}\hat b_j\hat b_k=0,
\]
and therefore $[\boldsymbol{\pi}^2, \hat{\mathbf b} \cdot \boldsymbol{\pi}]=0$. This does not require $\beta$ to be homogeneous. Moreover, 
$[V, \hat{\mathbf b} \cdot \boldsymbol{\pi}]=i\hbar \partial_\parallel V$, so
\[
[H,[H,x_\parallel]]=\frac{\hbar^2}{m} \partial_\parallel V(\mathbf x).
\]
The TRK double-commutator for $x_\parallel$ equals $[x_\parallel,[H,x_\parallel]]=\hbar^2/m$, hence
\(
\Var_0(x_\parallel)\le \hbar^2/(2m\Delta_\parallel),
\)
which gives \eqref{eq:var-par-inhomB}. If equality holds, then as before
\(
(H-E_0)^2 x_\parallel\psi_0=\Delta_\parallel^2 x_\parallel\psi_0,
\)
and from the last formula it follows that
\(
\partial_\parallel V=m\omega^2 x_\parallel
\)
with $\omega=\Delta_\parallel/\hbar$. Integrating in $x_\parallel$ at fixed $\mathbf x_\perp$ yields \eqref{eq:V-parabolic-inhomB}; the converse implication 
is verified by direct substitution.
\end{proof}

\begin{remark}[A priori non-separability]
No separability of $V$ is assumed in the statement. The equality condition itself forces the form \eqref{eq:V-parabolic-inhomB}: any mixed terms of 
the type $x_\parallel \Phi(\mathbf x_\perp)$ contradict the linearity $\partial_\parallel V=m\omega^2 x_\parallel$.
\end{remark}

\section{Consequences and applications}\label{sec:conseq}

This section collects immediate physical consequences of the sharp bounds and quantitative rigidity from 
§§\ref{sec:theoremA}-\ref{sec:rigidity}, as well 
as their magnetic counterparts from §§\ref{sec:magnetic-parallel}, \ref{sec:magnetic-transverse}, 
\ref{sec:magnetic-parallel-inhom}.

\subsection{Static dipole polarizability}

The static polarizability along \(x\) is given by the Kramers-Heisenberg sum (see, e.g., \cite{Summa2023JCTC,Szabo2022PRL,Goger2024JCTC,Cheng2024PRA})
\[
\alpha(0) = 2\sum_{n>0}\frac{|x_{n0}|^2}{E_n-E_0}
 = \frac{2}{\Delta}\sum_{n>0}\frac{\Delta}{E_n-E_0} |x_{n0}|^2,
\]
where \(x_{n0}:=\langle\psi_n, x \psi_0\rangle\). Since \(E_n-E_0\ge\Delta\), we obtain
\begin{equation}\label{eq:alpha-upper-1}
\alpha(0) \le \frac{2}{\Delta}\sum_{n>0}|x_{n0}|^2
 = \frac{2 \Var_0(x)}{\Delta}.
\end{equation}
Combining \eqref{eq:alpha-upper-1} with the sharp bound of \eqref{eq:mainbound}, for a fixed gap \(\Delta\) we have
\begin{equation}\label{eq:alpha-upper-2}
\boxed{\qquad
\alpha(0) \le \frac{\hbar^2}{m \Delta^2},
\qquad
\text{and equality holds \emph{iff} } V(x)=\tfrac12 m\omega^2x^2+C, \hbar\omega=\Delta.
\qquad}
\end{equation}
Physically, it means that at a given operating frequency (energy gap), no anharmonic confinement can increase the static polarizability beyond that of the harmonic trap.

For a homogeneous field $\mathbf B=B\hat{\mathbf b}$ and the longitudinal coordinate $x_\parallel=\hat{\mathbf b} \cdot \mathbf x$, we similarly have
\begin{equation}\label{eq:alpha-par}
\alpha_\parallel(0) = 2\sum_{n>0}\frac{|\langle\psi_n, x_\parallel \psi_0\rangle|^2}{E_n-E_0}
 \le \frac{2 \Var_0(x_\parallel)}{\Delta_\parallel}
 \le \frac{\hbar^2}{m \Delta_\parallel^{ 2}},
\end{equation}
with equality \emph{iff} $V$ is quadratic along $\hat{\mathbf b}$ (Theorem~\ref{thm:iff-par}). 

In the transverse plane it is natural to consider the response of
 the guiding-center coordinate $R_u$.
  Defining the ''transverse polarizability''
\[
\alpha_{R_u}(0):=2\sum_{n>0}\frac{|\langle\psi_n, R_u \psi_0\rangle|^2}{E_n-E_0},
\]
we obtain from \eqref{eq:var-Ru-bound} 
\begin{equation}\label{eq:alpha-Ru}
\boxed{\qquad
\alpha_{R_u}(0) \le \frac{2 \Var_0(R_u)}{\Delta_{R_u}}
 \le \frac{\ell_B^4}{\Delta_{R_u}^{ 2}} 
\Big\langle w^{ \top}(\nabla\nabla V)w \Big\rangle_0,
\qquad w=u\times\hat{\mathbf b}.
\qquad}
\end{equation}
Equality in \eqref{eq:alpha-Ru} is possible only when the transverse curvature along $w$ is constant
 (see §\ref{sec:magnetic-transverse}).

\subsection{Quantum metric (fidelity susceptibility)}

For the unitary family generated by the position operator, \(U(\lambda)=e^{-i\lambda x/\hbar}\), the quantum metric along \(x\) is
\begin{equation}\label{eq:gxx-def}
g_{xx} = \frac{\Var_0(x)}{\hbar^2}.
\end{equation}
From \eqref{eq:mainbound} we immediately obtain the sharp bound
\begin{equation}\label{eq:gxx-bound}
\boxed{\qquad
g_{xx} \le \frac{1}{2m \Delta} ,
\qquad
\text{and equality holds \emph{iff} } V(x)=\tfrac12 m\omega^2x^2+C, \hbar\omega=\Delta.
\qquad}
\end{equation}
In a magnetic field, for parametric shifts of the guiding center
\(\mathbf R_\perp\mapsto \mathbf R_\perp+\lambda u\) we have
\begin{equation}\label{eq:gRu-bound}
g_{R_u R_u} = \frac{\Var_0(R_u)}{\hbar^2}
 \le \frac{\ell_B^4}{2\hbar^2 \Delta_{R_u}} 
\Big\langle w^{ \top}(\nabla\nabla V)w \Big\rangle_0,
\qquad w=u\times\hat{\mathbf b}.
\end{equation}
Saturation is possible only in case of constant curvature along $w$ (see §\ref{sec:magnetic-transverse}).

Theorems D1-D3 directly control the ''metric deficit''
\[
\delta g_{xx}:=\frac{1}{2m\Delta}-g_{xx}
=\frac{\varepsilon}{\hbar^2},
\]
in terms of \(\Gamma\), \(\eta_{\mathrm{TRK}}\), and the anharmonicity measure
\( V'-m\omega^2x\)  in \({L^2(\rho_0)}\). Analogous bounds hold for \(g_{R_u R_u}\) using \eqref{eq:D1-Ru}-\eqref{eq:D2-Ru}.

For a parametric shift generated by a self-adjoint \(A\), the quantum metric
\(g_{AA}=\Var_0(A)/\hbar^2\) satisfies
\begin{equation}\label{eq:metricA}
g_{AA} = \frac{\Var_0(A)}{\hbar^2}
 \le  \frac{1}{2\hbar^2\Delta_A} \big\langle [A,[H,A]]\big\rangle_0.
\end{equation}
The equality criterion is the same as above: \(A\psi_0\) is entirely supported in
the active subspace with gap \(\Delta_A\).
The special cases \(A=x\), \(A=x_\parallel\), \(A=R_u\), and \(A=p\) follow from \eqref{eq:metricA} by substitution.

\subsection{A corridor for the momentum variance}

For $p=-i\hbar \partial_x$ we have
\(
[H,p]=[V,p]=i\hbar V'(x)
\)
and
\(
[p,[H,p]]=\hbar^2 V''(x)
\).
Therefore the $f$-sum rule gives

\begin{equation}\label{eq:trk-p}
\sum_{n>0}(E_n-E_0) |p_{n0}|^2
=\frac12 \big\langle [p,[H,p]]\big\rangle_0
=\frac{\hbar^2}{2} \langle V''\rangle_0,
\end{equation}
where $p_{n0}:=\langle\psi_n, p \psi_0\rangle$. From this, the standard bound by the active gap
yields an \emph{upper} estimate
\begin{equation}\label{eq:Varp-upper}
\boxed{\qquad
\Var_0(p)=\sum_{n>0}|p_{n0}|^2
 \le \frac{1}{\Delta}\sum_{n>0}(E_n-E_0)|p_{n0}|^2
=\frac{\hbar^2}{2\Delta} \langle V''\rangle_0.
\qquad}
\end{equation}

Assume, in addition, that the second gap $\Gamma=E_2-E_1$ is known and that for all $n\ge2$
the contributions $|p_{n0}|^2$ can occur only when $E_n-E_0\ge \Delta+\Gamma$. Then from \eqref{eq:trk-p}
we also obtain a lower estimate

\begin{equation}\label{eq:Varp-lower}
\boxed{\qquad
\Var_0(p)
 \ge \frac{1}{\Delta+\Gamma}\sum_{n>0}(E_n-E_0)|p_{n0}|^2
=\frac{\hbar^2}{2(\Delta+\Gamma)} \langle V''\rangle_0.
\qquad}
\end{equation}
Together, \eqref{eq:Varp-upper}-\eqref{eq:Varp-lower} provide a corridor for $\Var_0(p)$ at fixed $\Delta,\Gamma$ and mean curvature $\langle V''\rangle_0$.

The uncertainty principle gives \(\Var_0(x)\Var_0(p)\ge\hbar^2/4\). Combining this with \eqref{eq:mainbound},
we obtain the universal lower estimate

\begin{equation}\label{eq:Varp-floor}
\boxed{\qquad \Var_0(p) \ge \frac{m \Delta}{2} , \qquad}
\end{equation}
independent of the shape of the confining potential \(V\).
In a homogeneous magnetic field, for the longitudinal component
\(p_\parallel=\hat{\mathbf b} \cdot \mathbf p\) we similarly have
\begin{equation}\label{eq:Varp-par-floor}
\Var_0(p_\parallel) \ge \frac{m \Delta_\parallel}{2} .
\end{equation}

\subsection{Numerical illustration (anharmonic oscillator)}
Consider the family \(V(x)=\tfrac12 m\omega^2 x^2+\lambda x^4\) in units \(\hbar=m=\omega=1\).
For the ground state \(\psi_0\) and first excited state \(\psi_1\) we compute $E_0$,$E_1$, and the gap \(\Delta=E_1-E_0\). 
We also evaluate the variance \(\Var_0(x)\) and the deficit
\[
\varepsilon  :=  \frac{\hbar^2}{2m \Delta}-\Var_0(x) \ge 0,
\]
as well as the structural norm of the deviation from harmonicity
\[
D_3:=\bigl(  V'(x)-m\omega^2 x \bigr) _{L^2(\rho_0)},\qquad \rho_0=|\psi_0|^2.
\]
The exact identity \eqref{eq:eps-exact} expresses \(\varepsilon\) in terms of spectral contributions.

\begin{figure}[t]
  \centering
  \includegraphics[width=\linewidth]{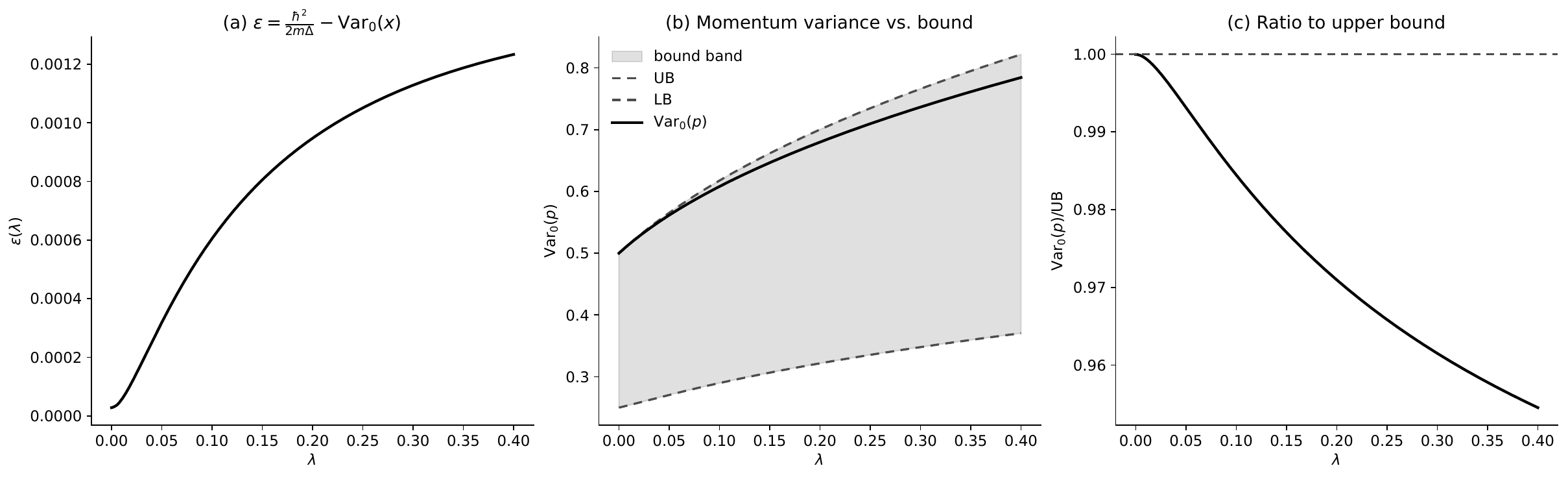}
  \caption{Numerical illustration for the anharmonic oscillator $V(x)=\tfrac12 x^2+\lambda x^4$
(units $\hbar=m=\omega=1$). Horizontal axis: $\lambda$.
(a) Deficit $\varepsilon(\lambda)$ from the sharp bound on $\Var_0(x)$.
(b) Ground-state momentum variance $\Var_0(p)$ compared with the theoretical bounds from Sec.~6.3.
(c) Ratio of $\Var_0(p)$ to the corresponding upper bound.}

  \label{fig:varp-corridor}
\end{figure}

\paragraph{Why these plots suffice.}
Panel (a) shows that the deficit
\(\varepsilon(\lambda)=\tfrac{\hbar^{2}}{2m\Delta}-\mathrm{Var}_0(x)\) 
is non-negative for the anharmonic family \(V(x)=\tfrac12 x^{2}+\lambda x^{4}\) and vanishes at \(\lambda=0\). This directly 
tests the sharp bound \(\mathrm{Var}_0(x)\le \hbar^{2}/(2m\Delta)\) and its saturation in the harmonic limit. 

Panels (b)-(c) 
validate the momentum-variance ''corridor'' from Sec.~6.3. The numerical values of \(\mathrm{Var}_0(p)\) remain inside the band 
\([\mathrm{LB},\mathrm{UB}]\), with 
\(\mathrm{UB}= \tfrac{\hbar^{2}}{2\Delta}\langle V''\rangle_0\) 
and 
\(\mathrm{LB}= \tfrac{\hbar^{2}}{2(\Delta+\Gamma)}\langle V''\rangle_0\). 
The ratio plot \(\mathrm{Var}_0(p)/\mathrm{UB}\le 1\) highlights saturation at \(\lambda=0\) and strict inequality for \(\lambda>0\). 

Together, (a)-(c) confirm our two core claims: (i) the extremal bound at fixed active gap \(\Delta\), and (ii) quantitative rigidity away 
from the harmonic point. The monotone growth of \(\varepsilon\) with \(\lambda\) signals departure from single-transition dominance 
and, by D1-D2, the associated growth of the TRK tail (not plotted here but implied by the theory).

\subsection{Practical implications}

\begin{itemize}
\item For a given operating gap \(\Delta\),
the bounds \eqref{eq:alpha-upper-2}, \eqref{eq:gxx-bound}, and \eqref{eq:Varp-upper}
set universal limits on the polarizability, the ''geometric'' sensitivity,
and the ground-state momentum spread; equality is achieved only by the harmonic trap.
\item  The deficit \(\varepsilon\) is quantitatively linked
to the TRK tail and the norm \( V'-m\omega^2x _{L^2(\rho_0)}\) (see Theorems~\ref{thm:D1}-\ref{thm:D3});
the same language carries over to the transverse direction for \(R_u\) (§\ref{sec:magnetic-transverse}).
\item  For ions/cold atoms/quantum dots,
\(\Delta\) and \(\Gamma\) are extracted spectroscopically, while \(\Var_0(x)\) can be obtained
from density images or via the metric \(g_{xx}\).
Testing \eqref{eq:alpha-upper-2}-\eqref{eq:gxx-bound} provides a strict sanity check
and serves as a diagnostic of proximity to the harmonic regime.
\end{itemize}

\begin{remark}[Why there is no general transverse iff for inhomogeneous \(B\)]
If \(u\perp\hat{\mathbf b}\) and \(\nabla_\perp B\neq0\), then
\([ \pi^2, u \cdot \boldsymbol{\pi} ]\neq0\). The commutator contains both multiplicative terms from \(\mathbf B(\mathbf x)\)
and a symmetric part involving \(\boldsymbol{\pi}\).
It follows that the operator \([H,[H,R_u]]\) ceases to be a pure multiplier.
Therefore, without additional structural hypotheses (e.g., \(B(\mathbf x)=\beta(x_\parallel)\hat{\mathbf b}\) and
\(w^{ \top}(\nabla\nabla V)w\equiv\text{const}\)) a general iff is unattainable.
Within such a class, however, our results remain sharp as upper bounds and yield quantitative rigidity via 
\eqref{eq:D1-Ru}-\eqref{eq:D2-Ru}.
\end{remark}

\section{Conclusion}\label{sec:conclusion}

We solve the static isoperimetric problem for the Mandelstam-Tamm bound in one-dimensional quantum mechanics. For a fixed 
spectral gap \(\Delta=E_{1}-E_{0}\) we establish the \emph{sharp} upper bound
\[
\Var_{0}(x) \le \frac{\hbar^{2}}{2m \Delta},
\]
and prove \emph{iff}-rigidity: equality holds if and only if the potential is harmonic along the corresponding direction 
(see \Cref{eq:mainbound}). This yields a rigorous static analogue of the quantum speed limit in geometric terms. 
For the Fubini-Study metric \(g_{xx}=\Var_{0}(x)/\hbar^{2}\) one has \(g_{xx}\le (2m\Delta)^{-1}\) with the same equality 
criterion (see \Cref{sec:conseq}).

A key feature is \emph{quantitative rigidity}. 
The deficit
\(
\varepsilon=\frac{\hbar^{2}}{2m\Delta}-\Var_{0}(x)\ge0
\)
is controlled by (i) the TRK tail via the second gap \(\Gamma\) (see \Cref{thm:D1,thm:D2}) and (ii) the 
structural deviation \(V'(x)-m\omega^{2}x\) in the \(L^{2}(\rho_{0})\) norm with explicit constants (\Cref{thm:D3}). 
These estimates turn ''near saturation'' into a measurable scale of trap anharmonicity and yield sharp bounds for the 
polarizability and the quantum metric (\Cref{sec:conseq}).

We extend the picture to magnetic systems.
 Along the field direction (including inhomogeneous fields of fixed direction)
 we obtain the same sharp bound and a full \emph{iff} at fixed active gap \(\Delta_{\parallel}\) 
 (see \Cref{thm:iff-par,thm:iff-par-inhomB}). 
 
In the transverse plane, for the guiding-center coordinate \(R_{u}\), 
 we establish an exact TRK formula through the projected Hessian of \(V\) and the corresponding rigidity bounds 
 (see \Cref{sec:magnetic-transverse}). A global \emph{iff} is achieved within a natural structural 
 class of potentials with constant transverse curvature.

The bounds provide universal limits on sensitivity (metric, polarizability) and on the ground-state momentum spread at a 
given operating gap—independent of the specific confining potential. The quantitative rigidity estimates allow one to calibrate 
anharmonicity from spectroscopic data (\(\Delta,\Gamma\)) and from the ground-state density \(\rho_{0}\), which is directly 
useful for the design and diagnostics of ion/atom and optomechanical traps.

The open questions are:

(1) In the structural estimate of \(\varepsilon\) via \( V'-m\omega^{2}x _{L^{2}(\rho_{0})}\) the constants are optimal 
in order, but their precise optimality across broad classes of \(V\) remains open. 

(2) In the transverse direction with an inhomogeneous field a general \emph{iff} is unattainable without extra assumptions; 
it is of interest to identify a minimal set of hypotheses that guarantees saturation. 

(3) Multidimensional versions with several active channels and possible ground-state degeneracies require a separate analysis 
of domains and symmetry selection. 

(4) It is promising to extend the results to many-body Hamiltonians (effective masses, interactions) and to dynamical scenarios 
(time-dependent fields), where the static isoperimetric bound should serve as a sharp upper limit under slow modulations.

In sum, we have established sharp geometric bounds at a fixed spectral gap and endowed them with quantitative rigidity, paving 
the way for precise metrological certification (fingerprinting) of traps and for extensions to many-body and field-theoretic 
models.

Here we briefly compare our sharp static bound and rigidity with kindred estimates—from the TRK sum rule and spectral 
inequalities to geometric QSLs. The aim is to stress that the scaffold (TRK + a ''Chebyshev-type'' bound via the active gap) 
is known, whereas the new content is the \textbf{sharpness + full iff + quantitative rigidity} (D1-D3) together with 
the magnetic extensions.

\begin{table}[H]
\centering
\small
\begin{tabular}{|p{3.4cm}|p{3.5cm}|p{2.8cm}|p{2.6cm}|p{3.6cm}|}
\hline
\textbf{Class of estimate} & \textbf{Typical formulation} & \textbf{Gap dependence} & \textbf{Sharpness} & \textbf{Positioning of our work} \\
\hline
TRK ($f$-sum) for $x$ \cite{SakuraiQM} &
$\sum_{n>0}(E_n{-}E_0) |x_{n0}|^2=\hbar^2/(2m)$ &
Does not directly yield $\Var_0(x)$; a ''Chebyshev'' step via the active gap is needed &
Identity (sharp for the sum), but no criterion for $\Var_0$ &
We turn TRK{+}gap into a \textbf{sharp} bound $\Var_0(x)\le \hbar^2/(2m\Delta)$ with a \textbf{full iff} (harmonic case) \\
\hline
''Chebyshev-type'' bound via the active gap (general form for $A$) &
$\Var_0(A)\le \dfrac{1}{2\Delta_A}\langle [A,[H,A]]\rangle_0$ (see §\ref{sec:setup}) &
Explicitly calibrated by the active gap $\Delta_A$ &
Sharp in abstract form, but without an equality characterization for $A=x$ and without stability &
We provide a transparent iff for $A{=}x,x_\parallel$ and quantitative stability (D1-D3) \\
\hline
Spectral/Poincaré inequalities (various forms; cf. \cite{ReedSimonII}) &
Relate variances to the generator: a variational lower bound on the first gap, or $ f{-}\langle f\rangle ^2 \le \lambda^{-1}\langle f,(-\mathcal L)f\rangle$ &
Rely on the global spectrum; do not give an upper bound on $\Var_0(x)$ at fixed $\Delta$ &
Sharp for their own functionals, but do not address a static QSL in $x$ &
Our result is precisely an \textbf{upper} bound on $\Var_0(x)$ at fixed $\Delta$, with iff and rigidity \\
\hline
Temple-type bounds (variational upper/lower bounds on levels; cf. \cite{ReedSimonII}) &
A posteriori bounds on \(E_0,E_1\) via trial states/variances &
Indirectly via estimates of \(E_1{-}E_0\) &
Sharp with good trials, but no direct control of \(\Var_0(x)\) &
We work at fixed \(\Delta\) and obtain a \textbf{universal} bound for \(\Var_0(x)\) \\
\hline
MT/QSL geometry \cite{MandelstamTamm1945,AnandanAharonov1990,DeffnerCampbell2017} &
Speed limits: \(\tau \ge \dfrac{\hbar \arccos F}{2\Delta E}\); \(g_{xx}=\Var_0(x)/\hbar^2\) for shift families \cite{ProvostVallee1980,BraunsteinCaves1994} &
Through energy spread/gaps, but dynamical &
Sharp for evolution time; no static iff characterization &
We give a \textbf{static} MT analogue: \(g_{xx}\le (2m\Delta)^{-1}\) with iff and stability \\
\hline
HPT/Kohn theorem (magnetic field) \cite{Kohn1961,Dobson1994HPT} &
Harmonic ''Kohn mode'' does not mix; dipole responses in a parabolic trap &
Does not formulate a static bound on \(\Var_0\) &
Asserts \emph{separation} under parabolicity; no general iff &
Longitudinal \textbf{iff} for \(x_\parallel\) even for inhomogeneous \(B\parallel \hat{\mathbf b}\); transverse—exact TRK for \(R_u\) and stability (§\ref{sec:magnetic-parallel}, \ref{sec:magnetic-transverse}, \ref{sec:magnetic-parallel-inhom}) \\
\hline
Polarizability/metric (static) \cite{PezzeSmerzi2018RMP,Leibfried2003RMP,Aspelmeyer2014RMP} &
\(\alpha(0)=2\sum_{n>0}\dfrac{|x_{n0}|^2}{E_n-E_0}\),\quad \(g_{xx}=\Var_0(x)/\hbar^2\) &
Depends on the full transition tail &
Bounds are typically non-sharp without specific models &
We obtain \textbf{sharp} upper bounds \(\alpha(0)\le \hbar^2/(m\Delta^2)\), \(g_{xx}\le (2m\Delta)^{-1}\) with iff; plus stability (D1-D3) \\
\hline
\end{tabular}
\caption{Where our result fits. The scaffold (TRK + gap) is known; the new ingredients are \textbf{sharpness + iff} and 
\textbf{quantitative rigidity} (D1-D3), together with magnetic counterparts featuring longitudinal iff and transverse TRK 
identities.}
\label{tab:positioning}
\end{table}

\paragraph{Key distinctions of our work.}
(i) \textbf{Sharp static QSL} for \(x\) (and \(x_\parallel\)) with a \textbf{full iff}: harmonic \(\Leftrightarrow\) saturation 
( \Cref{thm:iff-par,thm:iff-par-inhomB}). 
(ii) \textbf{Quantitative rigidity}: the deficit \(\varepsilon\) linearly controls the TRK tail via the second gap \(\Gamma\) 
and the structural norm \(V'-m\omega^2x\) (D1-D3 in §\ref{sec:rigidity}). 
(iii) \textbf{Magnetic extensions}: an exact longitudinal identity \( [H,[H,x_\parallel]]=(\hbar^2/m)\partial_\parallel V \) and iff; 
transversely—TRK for the guiding-center \(R_u\) with explicit constants (§§\ref{sec:magnetic-parallel}-\ref{sec:magnetic-transverse}). 
(iv) \textbf{Applied corollaries}: sharp upper bounds for the static polarizability and quantum metric, and a ''corridor'' for \(\Var_0(p)\) 
(§\ref{sec:conseq}).

\bibliographystyle{iopart-num} 
\bibliography{refs}

\appendix

\section{Proofs of identities and domain issues}\label{app:identities}

This appendix collects complete proofs of the identities used in the main text and brief domain remarks for the one-dimensional Hamiltonian
\(
H=\frac{p^{2}}{2m}+V(x)
\)
with a confining potential \(V\in C^{2}(\mathbb R)\), \(V(x)\to+\infty\) as \(|x|\to\infty\).
Under these assumptions \(H\) is self-adjoint on its natural domain, its spectrum is discrete and simple, and the ground state is non-degenerate and can be chosen strictly positive \(\psi_{0}(x)>0\) (see classical Sturm-Liouville results; cf. \cite{ReedSimonII}). Positivity allows one, whenever it occurs, to legitimately divide identities of the form \(F(x)\psi_{0}(x)=G(x)\psi_{0}(x)\) by \(\psi_{0}(x)\), yielding \(F=G\) almost everywhere and, by continuity, everywhere.

\begin{lemma}[Spectral identity]\label{lem:app-spectral}
Let \(\{|n\rangle\}_{n\ge0}\) be a complete set of eigenvectors with \(H|n\rangle=E_{n}|n\rangle\), \(E_{0}<E_{1}\le E_{2}\le\cdots\).
For any operator \(A\) such that \([H,A]\) is defined on \(|0\rangle\), one has
\begin{equation}\label{eq:app-spectral-identity}
\langle0|[A,[H,A]]|0\rangle
=2\sum_{n>0}(E_{n}-E_{0}) |\langle n|A|0\rangle|^{2}.
\end{equation}
\end{lemma}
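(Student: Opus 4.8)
The plan is to expand both sides of \eqref{eq:app-spectral-identity} by inserting a resolution of the identity $\sum_{n\ge 0}|n\rangle\langle n|$ and exploiting the eigenvalue equation $H|n\rangle=E_n|n\rangle$. First I would write the inner double commutator as $[H,A]=HA-AH$ and note that on the ground state $AH|0\rangle=E_0A|0\rangle$, so $[H,A]|0\rangle=(H-E_0)A|0\rangle$. Then the outer commutator gives $\langle 0|[A,[H,A]]|0\rangle=\langle 0|A[H,A]|0\rangle-\langle 0|[H,A]A|0\rangle$. In the first term I insert $\mathbf 1=\sum_n|n\rangle\langle n|$ between $A$ and $[H,A]$; in the second term I use that $\langle 0|[H,A]=\langle 0|(HA-AH)=\langle 0|A(E_0-H)$ acting to the right on the states, again inserting the resolution of the identity. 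Matching matrix elements $\langle n|A|0\rangle$ and $\langle 0|A|n\rangle=\overline{\langle n|A|0\rangle}$ (using self-adjointness of $A$) and the energy factors $(E_n-E_0)$ collapses everything to $2\sum_{n>0}(E_n-E_0)|\langle n|A|0\rangle|^2$, where the $n=0$ term drops because its energy factor vanishes.

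Concretely, the cleanest route is: set $|\phi\rangle:=A|0\rangle$ and observe $[H,A]|0\rangle=(H-E_0)|\phi\rangle$. Then
\[
\langle 0|[A,[H,A]]|0\rangle
=\langle 0|A(H-E_0)|\phi\rangle-\langle\phi|(H-E_0)A^\dagger|0\rangle,
\]
but since $A$ is self-adjoint the second term is $\langle\phi|(H-E_0)|\phi\rangle$ and the first is $\langle\phi|(H-E_0)|\phi\rangle$ as well after using $\langle 0|A=\langle\phi|$; wait, more carefully one writes the first term as $\langle 0|A\,(H-E_0)\,A|0\rangle=\langle\phi|(H-E_0)|\phi\rangle$ directly, and the second term carefully unpacked gives $-\langle 0|(HA-AH)A|0\rangle=-\langle 0|HA^2|0\rangle+E_0\langle 0|A^2|0\rangle$, which via $\langle 0|H=\langle 0|E_0$ does \emph{not} immediately cancel — so instead I keep the symmetric form $\langle 0|[A,[H,A]]|0\rangle = 2\langle\phi|(H-E_0)|\phi\rangle - \big(\langle 0|A^2(H-E_0)|0\rangle+\text{h.c.}\big)$ and note the subtracted bracket vanishes because $(H-E_0)|0\rangle=0$. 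Then expanding $\langle\phi|(H-E_0)|\phi\rangle=\sum_n (E_n-E_0)|\langle n|\phi\rangle|^2=\sum_{n>0}(E_n-E_0)|\langle n|A|0\rangle|^2$ gives the claim.

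The main obstacle is not the algebra — which is routine — but the domain bookkeeping: one must ensure $A|0\rangle$ lies in the domain of $H$ (or at least of $(H-E_0)^{1/2}$) so that the spectral sum $\sum_n(E_n-E_0)|\langle n|A|0\rangle|^2$ converges and equals $\langle\phi|(H-E_0)|\phi\rangle$, and that the rearrangement of $HA-AH$ acting on $|0\rangle$ is justified (closability of the commutator). The hypothesis ``$[H,A]$ is defined on $|0\rangle$'' is exactly what licenses writing $[H,A]|0\rangle=(H-E_0)A|0\rangle$ with $A|0\rangle\in\mathrm{Dom}(H)$; with that in hand the spectral expansion of a vector in $\mathrm{Dom}(H)$ is standard. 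For the applications in the paper ($A=x$, $x_\parallel$, $R_u$, $p$ with confining $C^2$ potentials) these conditions are verified in Appendix~A, so here I would simply invoke self-adjointness of $H$, discreteness and simplicity of the spectrum, and completeness of $\{|n\rangle\}$, flag the domain point in one sentence, and present the two-line computation above.
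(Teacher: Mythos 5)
Your proof is correct and takes essentially the same route as the paper: both rest on the eigenvalue equation and a spectral expansion in the basis $\{|n\rangle\}$, the only cosmetic difference being that you first collapse the double commutator to $2\,\langle A\psi_0|(H-E_0)|A\psi_0\rangle$ and expand once, while the paper inserts the resolution of the identity into each term of the outer commutator directly. Your explicit appeal to self-adjointness of $A$ (to identify $\langle 0|A|n\rangle=\overline{\langle n|A|0\rangle}$) and your one-sentence domain caveat are points the paper's statement of the lemma glosses over but likewise needs.
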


\begin{proof}
Inserting the resolution of the identity \(\mathbb 1=\sum_{n}|n\rangle\langle n|\) and using \(H|n\rangle=E_{n}|n\rangle\), we obtain
\[
\langle0|[A,[H,A]]|0\rangle
=\sum_{n}\langle0|A|n\rangle\langle n|[H,A]|0\rangle
-\sum_{n}\langle0|[H,A]|n\rangle\langle n|A|0\rangle.
\]
But \(\langle n|[H,A]|0\rangle=(E_{n}-E_{0})\langle n|A|0\rangle\), hence
\[
\langle0|[A,[H,A]]|0\rangle
=\sum_{n}(E_{n}-E_{0}) \langle0|A|n\rangle\langle n|A|0\rangle
+\sum_{n}(E_{n}-E_{0}) \langle0|A|n\rangle\langle n|A|0\rangle,
\]
which yields \eqref{eq:app-spectral-identity}. The \(n=0\) term vanishes.
\end{proof}

\begin{lemma}[Double commutator for the coordinate]\label{lem:app-double}
For \(H=\frac{p^{2}}{2m}+V(x)\) the identities
\begin{equation}\label{eq:app-xx-comm}
[x,[H,x]]=\frac{\hbar^{2}}{m},
\qquad
[H,[H,x]]=\frac{\hbar^{2}}{m} V'(x)
\end{equation}
hold.
\end{lemma}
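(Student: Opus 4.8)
\emph{Proof proposal.}
The plan is to get both identities from the canonical relation $[x,p]=i\hbar$ (equivalently $[p,x]=-i\hbar$) together with the fact that $V(x)$, as a multiplication operator, commutes with $x$. First I would compute the single commutator: since $[V(x),x]=0$, only the kinetic term contributes, and using the derivation (Leibniz) property $[p^{2},x]=p[p,x]+[p,x]p=-2i\hbar\,p$ one obtains $[H,x]=-\tfrac{i\hbar}{m}\,p$, which is the middle identity of \eqref{eq:double-comm}. The first claimed identity then follows in one line: $[x,[H,x]]=-\tfrac{i\hbar}{m}[x,p]=-\tfrac{i\hbar}{m}(i\hbar)=\tfrac{\hbar^{2}}{m}$, a constant multiple of the identity operator.

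For the double commutator I would write $[H,[H,x]]=-\tfrac{i\hbar}{m}[H,p]$. The kinetic part drops out because $[p^{2},p]=0$, so $[H,p]=[V(x),p]$, and a short computation on test functions (the product rule) gives $[p,V(x)]=-i\hbar V'(x)$, hence $[V(x),p]=i\hbar V'(x)$. Combining,
\[
[H,[H,x]]=-\frac{i\hbar}{m}\cdot i\hbar\,V'(x)=\frac{\hbar^{2}}{m}\,V'(x),
\]
which is the second identity in \eqref{eq:app-xx-comm}. Plugging this into the spectral identity of \Cref{lem:app-spectral} with $A=x$ also recovers \eqref{eq:app-spectral-identity} and hence the TRK sum rule, but that is a separate lemma.

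The algebra is trivial; the only genuine obstacle is the functional-analytic bookkeeping, i.e.\ checking that these formal manipulations are legitimate on the vectors to which they are applied (in the main text, on $\psi_0$ and on $x\psi_0$). Under the standing assumptions $V\in C^{2}(\mathbb R)$, $V\to+\infty$, the eigenfunctions decay rapidly, $\psi_0\in C^{2}$, and $x\psi_0,\ p\psi_0$ lie in the form domain of $H$; the identities in \eqref{eq:app-xx-comm} therefore hold as operator identities on a common dense core (e.g.\ $C_c^{\infty}(\mathbb R)$, or the finite span of eigenfunctions) and extend by closure. I would state this once, referring to the domain remarks at the head of this appendix, and then present the three-line computation above.
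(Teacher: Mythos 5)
Your proposal is correct and follows essentially the same route as the paper's own proof in Appendix~A: compute $[H,x]=-\tfrac{i\hbar}{m}p$ from $[x,p]=i\hbar$, then obtain $[x,[H,x]]=\hbar^{2}/m$ and $[H,[H,x]]=-\tfrac{i\hbar}{m}[V(x),p]=\tfrac{\hbar^{2}}{m}V'(x)$ using $[p^{2},p]=0$. Your added remarks on domains and cores merely make explicit what the paper defers to its appendix preamble, so nothing further is needed.
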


\begin{proof}
From \([x,p]=i\hbar\) and bilinearity of the commutator:
\([H,x]=\frac{1}{2m}[p^{2},x]=\frac{1}{2m}(p[p,x]+[p,x]p)=-(i\hbar/m) p\).
Then
\([x,[H,x]]=[x,-(i\hbar/m)p]=-(i\hbar/m)[x,p]=\hbar^{2}/m\).
Further,
\([H,[H,x]]=-(i\hbar/m)[H,p]=-(i\hbar/m)[V(x),p]=(\hbar^{2}/m)V'(x)\),
where we used \([p^{2},p]=0\) and \([V(x),p]=i\hbar V'(x)\).
\end{proof}

Substituting \(A=x\) into \eqref{eq:app-spectral-identity} and using \eqref{eq:app-xx-comm}, we obtain
\begin{equation}\label{eq:app-TRKx}
\sum_{n>0}(E_{n}-E_{0}) |\langle n|x|0\rangle|^{2}=\frac{\hbar^{2}}{2m}.
\end{equation}

\begin{remark}[On positivity of \(\psi_{0}\) and simplicity of the ground level]\label{rem:app-positivity}
In one dimension, for confining \(V\), the Schrödinger operator reduces to a Sturm-Liouville problem.
The ground eigenvalue is simple and the eigenfunction \(\psi_{0}\) has no nodes and can be chosen strictly positive.
This justifies division by \(\psi_{0}(x)\) in identities of the form
\([H,[H,x]]\psi_{0}=\Delta^{2}x\psi_{0}\) appearing in the main text.
See, e.g., \cite{ReedSimonII} for details.
\end{remark}

\end{document}